\newtheorem{proposition}{Proposition}
\let\oldv\verbatim
\let\oldendv\endverbatim
\def\verbatim{\par\setbox0\vbox\bgroup\oldv}
\def\endverbatim{\oldendv\egroup\fboxsep0pt \noindent\colorbox[gray]{0.96}{\usebox0}\par}
\DeclareMathOperator{\E}{\mathbb{E}}
\DeclareMathOperator{\R}{\mathbb{R}}
\renewcommand\@seccntformat[1]{\csname the#1\endcsname.\quad}
\def\@maketitle{%
	\begin{center}%
		\let \footnote \thanks
		{\large \@title \par}%
		{\normalsize
			\begin{tabular}[t]{c}%
				\@author
			\end{tabular}\par}%
		{\small \@date}%
	\end{center}%
}
\title{
	\bf  Adaptive Path Sampling in Metastable Posterior Distributions 
	\vspace{.1in} 
}
\author{Yuling Yao \footnote{joint first author.} \footnote{Department of Statistics, Columbia University.}
   \and  Collin  Cademartori \footnotemark[1] \footnotemark[2]
	\and Aki Vehtari \footnote{Helsinki Institute of Information Technology; Aalto University, Department of Computer Science.}
	\and Andrew Gelman \footnote{Department of Statistics and Political Science, Columbia University.}
}
\date{\vspace{.1in} 1 Sep 2020\vspace{-.15in} }
\begin{document}\sloppy
	\maketitle
	 \thispagestyle{empty}

\begin{abstract} 
The normalizing constant plays an important role in Bayesian computation, and there is a large literature on methods for computing or approximating normalizing constants that cannot be evaluated in closed form.  When the normalizing constant varies by orders of magnitude, methods based on importance sampling can require many rounds of tuning. We present an improved approach using adaptive path sampling, iteratively reducing gaps between the base and target.
Using this adaptive strategy,  we develop two metastable sampling schemes. They are automated in Stan and require little  tuning. For a multimodal posterior density, we equip simulated tempering with a continuous temperature. For a funnel-shaped entropic barrier, we adaptively increase mass in bottleneck regions to 
form an implicit divide-and-conquer. 
Both approaches empirically perform better than existing methods for sampling from   metastable distributions, including higher accuracy and computation efficiency.

\textbf{Keywords}: 
importance sampling, 
Markov chain Monte Carlo,
normalizing constant,
path sampling, 
posterior metastability, 
simulated tempering.
 \end{abstract}

% \spacingset{1.5} % JCGS submission format 
\section {The normalizing constant and posterior metastability}

In Bayesian computation, the posterior distribution is often available as an \emph{unnormalized density} $q(\theta)$. The unknown and often analytically-intractable integral  $ \int\! q(\theta) d\theta$ is called the \emph{normalizing constant} of $q$.
Many statistical problems involve estimating the normalizing constant, or the ratios of them among several densities. For example, the marginal likelihood of a statistical model with likelihood $p(y|\theta)$ and prior $p(\theta)$ is the normalizing constant of $p(y|\theta)p(\theta)$:  $p(y)= \int\! p (\theta, y) d\theta$. The Bayes factor of two models $p(y|\theta_1), p(\theta_1)$ and $p(y|\theta_2), p(\theta_2)$, requires the ratio of the normalizing constants in densities $p(y, \theta_1)$ and $p(y, \theta_2)$.

Besides, we are often interested in the normalizing constant as a function of parameters. In a posterior density $p(\theta_1, \theta_2,\ldots, \theta_d | y)$, the marginal density of coordinate $\theta_1$ is proportional to $\int\!\ldots\int\! p(\theta_1, \theta_2,\ldots, \theta_d | y) d \theta_2,\ldots, d\theta_d$,  the normalizing constant of the posterior density with respect to all remaining parameters.  Accurate normalizing constant estimation means we have well explored region containing most of the posterior mass, which implies that we can then accurately also estimate posterior expectations of many other functionals.

In simulated tempering and annealing, we augment the distribution $q(\theta)$ with an inverse temperature $\lambda$ and sample from $p(\theta, \lambda) \propto q(\theta)^\lambda$. Successful tempering requires fully exploring the space of $\lambda$, which in turn requires evaluation of the normalizing constant as a function of $\lambda$: $z(\lambda)= \int\!  q(\theta)^\lambda d \theta$. Similar tasks arise for model selection and  averaging on a series of statistical models indexed by a continuous tuning parameter $\lambda$: $p(\theta, y | \lambda )$.  In cross validation, we attach to each data point $y_i$ a $\lambda_i$ and augment the model $p(y_i|\theta) p(\theta)$ to be $q(\lambda, y, \theta)=  \prod_i p(y_i|\theta)^{\lambda_i} p(\theta)$, such that the pointwise leave-one-out log predictive density  $\log p(y_i | y_{-i})$ becomes  the log normalizing constant $\log \int\! p(y_i|\theta)p\left(\theta\mid y, \lambda_i=0, \lambda_j=1, \forall j\neq i\right) d \theta$.

%We will revisit  some of these problems in more depth in section \ref{sec_tempering} for they stand alone important statistical problems.

In all of these problems we are given an unnormalized density $q(\theta, \lambda)$, where $\theta \in \Theta$ is a multidimensional sampling parameter and  $\lambda \in \Lambda$ is a free parameter, %\footnote{Although we call use the terminology parameter for now, the observable data $y$ may appear in $\lambda$ in some problems. Otherwise $y$ is treated fixed hence the dependence is suppressed in this notation},
and we need to evaluate the integrals at any $\lambda$,
\begin{equation}\label{eq_integral_target}
z: \Lambda \to \R, ~  z(\lambda)=  \int_\Theta  q(\theta, \lambda) d \theta, ~\forall  \lambda \in \Lambda.
\end{equation}
$z(\cdot)$ is a function of $\lambda$. For convenience, throughout the paper we will call $z(\cdot)$ the  \emph{normalizing constant} without describing it is a function.  We also use notations $q$ and $p$ to distinguish unnormalized and normalized  densities.

In most applications, it is enough to capture $z(\lambda)$ up to a multiplicative factor that is free of $\lambda$,  or equivalently the ratios of this integral with respect to a fixed reference point $\lambda_0$ over any $\lambda$:
\begin{equation}\label{eq_integral_ratio}
\tilde z(\lambda)=  z(\lambda) /z(\lambda_0),~\forall  \lambda \in \Lambda.
\end{equation}
\subsection{Easy to find an estimate, prone to extrapolation}\label{sec_why}
Two accessible but conceptually orthogonal approaches stand out for the computation of \eqref{eq_integral_target} and \eqref{eq_integral_ratio}. Viewing \eqref{eq_integral_target} as the expectation with respect to the conditional density $\theta|\lambda \propto  q(\theta, \lambda)$, we can numerically integrate \eqref{eq_integral_target} using quadrature, where the simplest is linearly interpolation, and the log ratio in \eqref{eq_integral_ratio} can be computed from first order Taylor series expansion,
\begin{equation}\label{eq_score}
\log \frac{z(\lambda)} {z(\lambda_0)} \approx (\lambda  - \lambda_0) \frac{d }{d\lambda }  \log z(\lambda)\vert_{\lambda=\lambda_0}   \approx (\lambda  - \lambda_0) \frac{1}{ z(\lambda_0)}\int_\Theta   \left(\frac {d}{d \lambda }q(\theta, \lambda) \vert_{\lambda=\lambda_0} \right)d \theta.
\end{equation} 

In contrast, we can sample from the conditional density $\theta| \lambda_0 \propto q(\theta, \lambda_0)$, and compute  \eqref{eq_integral_target}  by 
importance sampling, 
\begin{equation}\label{eq_importance_sampling}
 \frac{z(\lambda)}{z(\lambda_0)} \approx  \frac{1}{S}\ {\sum_{s=1}^S  \frac{  q (\theta_s,   \lambda) } {  q (\theta_s,  \lambda_0)}}, \quad \theta_{s=1, \cdots, S} \sim q (\theta,   \lambda_0).
\end{equation} 

How should we choose between  estimates \eqref{eq_score} and \eqref{eq_importance_sampling}? There is no definite answer. For example, in variational Bayes with model parameter $\theta$ and variational parameter $\lambda$, the gradient part of the \eqref{eq_score}  is  the \emph{score function} estimator based on the \emph{log-derivative trick}, while the gradient of \eqref{eq_importance_sampling} under a location-scale family  in  $q (\theta_s| \lambda)$ is called the Monte Carlo gradient estimator using the \emph{reparametrization trick}---but it is in general unknown which is better.   

On the other hand, both \eqref{eq_score} and \eqref{eq_importance_sampling} impose severe scalability limitations on the dimension of $\lambda$ and  $\theta$.  Essentially they \emph{extrapolate} either from the density conditional on $\lambda_0$ or from simulation draws from $q(\theta|\lambda_0)$ to make inferences about the density conditional on $\lambda$, hence depending crucially on how close the two conditional distribution $\theta|\lambda_0$ and   $\theta|\lambda$ are. Even under a normal approximation, the Kullback-Leibler (KL) divergence between these densities  scales linearly with the  dimension of $\theta$. Thus it takes at least $\mathcal{O}\left(\exp(\mathrm{dim}(\theta)\right)$ posterior draws to make \eqref{eq_importance_sampling} practically accurate.   The reliability of  \eqref{eq_score} is further contingent on  how flat the Hessian of $z(\lambda_0)$ is, which is more intractable to estimate. In practice, these methods can fail silently especially when implemented as a black-box step in a large algorithm without diagnostics. 

\subsection{Free energy and sampling metastability}
From the Bayesian computation perspective, the log normalizing constant is interpreted as the analogy of ``free energy'' in physics (up to a multiplicative constant), in line with  the interpretation of $\log q(\theta)$ to be the ``potential energy" in Hamiltonian Monte Carlo sampling. 

A potential energy is called metastable if the corresponding probability measure has some regions of high probability, but separated by low probability connections.
Following are two types of metastability, which both cause Markov chain Monte Carlo algorithms difficulty moving between regions. This pathology is often identifiable by a large $\hat R$ between chains  and low effective sample size in generic sampling \citep{vehtari2019rank}.

\begin{enumerate}
\item In an energetic barrier, the density is isolated in multiple modes,  and the transition probability is low between modes. In particular, the Hamiltonian---the sum of the potential and kinetic energy---is preserved in every Hamiltonian Monte Carlo trajectory. Hence, a transition across modes is unlikely unless the kinetic energy is exceptionally large.

\item In an entropic barrier, or funnel, the typical set is connected only through narrow and possibly twisted ridges. This barrier is amplified when the dimension of $\theta$ increases, in a way that a random  walk in high dimensional space can hardly find the correct direction.   
\end{enumerate}

 \begin{figure} 
	\includegraphics[width=\textwidth] {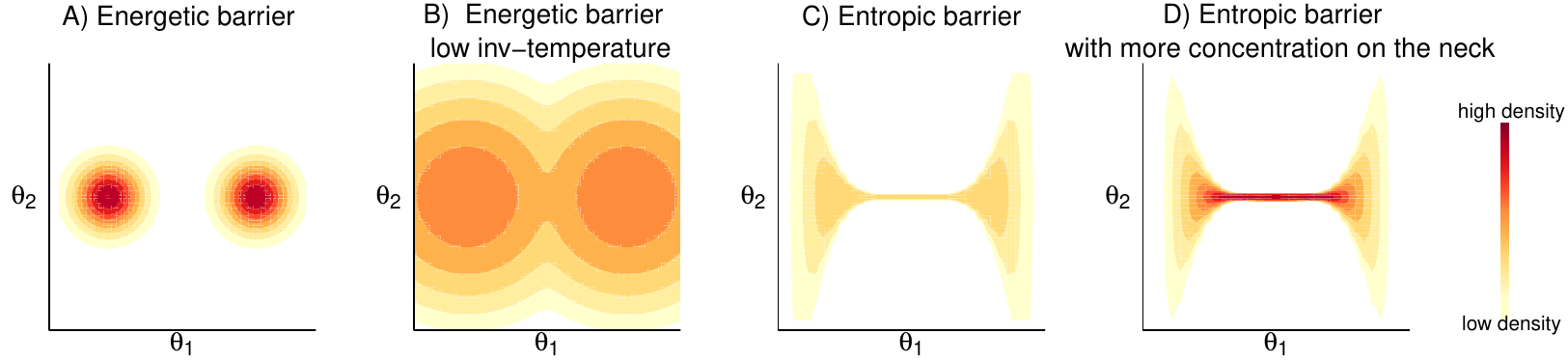}\caption{  \em An illustration of  metastability in a bivariate a distribution $p(\theta_1, \theta_2)$.  (A) With a  mixture of two Gaussian distributions, the energetic barrier prevents rapid mixing between modes.  (B) With a lower inverse temperature $\lambda$, the energetic barrier becomes flatter in the conditional distribution $p(\theta| \lambda)$. (C) With an entropic barrier, the left and right part of the distribution is only connected though a narrow tunnel, where the Markov chain will behave like a random walk. (D) Adding more density on the  neck increases the transition probability, while leaving $p(\theta_2|\theta_1)$ invariant. } \label{barrier}
\end{figure}

As the name suggests, we have the freedom to adaptively tune the free energy of the sampling distribution to remove the metastability therein.  The basic strategy is to augment the original metastable density $q(\theta)$ with an auxiliary variable $\lambda$, obtaining some density on the extended space $q(\theta, \lambda)$. For energetic barriers, we can take $\lambda$ to be an inverse temperature variable, a power transformation of the posterior density. The energetic barrier is flattened by a lower inverse temperature.   Temperature-based approaches cannot eliminate entropic barriers in the same way, but the transition is boosted by adding probability mass to the neck region.  Figure \ref{barrier} gives a graphical illustration.  

While the normalizing constant is not itself statistically meaningful in sampling from the augmented density $q(\theta, \lambda)$,  computing it serves as an essential intermediate step to constructing the otherwise intractable joint density. 
Unlike in usual Monte Carlo methods where the target distribution is given and and static, here as in an augmented system, we have the flexibility to either sample $\theta$ from some conditional distribution $\theta| \lambda$ at a discrete sequence of $\lambda$, or from some joint distribution of $(\theta, \lambda)$, treating $\lambda$ continuously.  
While continuous joint sampling has a finer-grained expressiveness for approximating the normalizing constant, it is harder to access the conditional distribution, which is ultimately what we need when $\lambda$ is an augmented  parameter. 

To facilitate the normalizing constant estimation and sampling in metastable distributions, 
the full problem contains three tasks.
\begin{enumerate}
    \item Determining what distribution we should sample $\theta$ and $\lambda$ from.
    \item Estimating the normalizing constants efficiently with the generated simulation draws. 
    \item Diagnosing the reliability of the sampling and estimation, particularly
    distinguishing between an informative extrapolation and a noisy random guess, and deciding when and where to adaptively resample.   
\end{enumerate}

In Section \ref{sec_method}, we introduce a practical solution to all three problems.  It extends the idea of path sampling \citep{Gelman1998Simulating} to an adaptive design, which performs both the  continuously-ranged  normalizing constant estimation, and direct sampling of the conditional density. Applying this strategy to metastable sampling,  we demonstrate in Sections \ref{sec_tempering} and \ref{sec_meta}  that our proposed adaptive path sampling method enables efficient sampling in both the energetic and entropic  bottlenecks,  and as a byproduct provides normalizing constant estimation and convergence diagnostics. In Section \ref{sec_review}, we   compare the proposed  adaptive sampling to other approaches and show that it is the infinitely dense limit of these  basic strategies  \eqref{eq_score} and \eqref{eq_importance_sampling}.
We  experimentally illustrate the advantage of the proposed methods in Section \ref{sec_exp}. For the purpose of log normalizing constant estimation and continuous tempering, we have automated our method in the general purpose software Stan \citep{stan}, and illustrate the practical implementation in the Appendix.

\section{Proposed method}\label{sec_method}
\subsection{The general framework of adaptive path sampling }\label{sec_method_general}
To begin with, we outline an adaptive path sampling algorithm for the general problem of normalizing constant (ratio) estimation \eqref{eq_integral_ratio} in a  $\lambda$-augmented system
$q(\theta, \lambda), \theta\in \Theta, \lambda \in \Lambda$.  We further elaborate its application in the context of metastable sampling in Section \ref{sec_tempering} and  \ref{sec_meta}. 

Instead of sampling $\lambda$ directly, we consider a transformed sampling parameter $a$ through $\lambda = f(a)$, where the link function
$f: \mathcal{A}\to \Lambda$ is continuously differentiable and $\mathcal{A}$ is the support of $a$.  For simplicity, we will use $\mathcal{A}= [0,1]$ in this section.   The actual sampling takes place in the  $\mathcal{A} \times \Theta$ space. 
If there is an interval $\mathcal{I}\subset\mathcal{A}$ which $f$ maps to a fixed value $\lambda_\mathcal{I}$, we can directly obtain conditional draws from $\theta|\lambda_\mathcal{I}$ by $\{\theta_i : a_i\in \mathcal{I}\}$, while not suffering from discretization errors of the the normalization constant $z(\lambda)= \int_{\Theta}q(\theta, \lambda)d\theta.$     We denote the conditional density $\pi_{\lambda}:= p(\theta|\lambda) = {q(\theta, \lambda) }/ z(\lambda)$.

The general algorithm then  iterates the following four steps.
\paragraph{Step 1. Joint sampling with invariant conditional densities.}
To start, we sample $S$ \emph{joint} simulation  draws $(\theta_i, a_i)_{i=1}^S$ from a joint density 
\begin{equation}\label{eq_joint_sample}
p(\theta, a)\propto  \frac{1}{c(\lambda)} q(\theta, \lambda), ~ \lambda= f(a),
\end{equation}
where $c(\lambda)$ is a parametric pseudo prior that is constructed using a series of kernels  $\{\gamma_i(\lambda)\}_{i=1}^I$ and regression coefficients $\{\beta_{cj}\}_{j=0}^J$ (which will be updated adaptively throughout the algorithm),
\begin{equation}\label{eq_parametric_form}
\log c(\lambda)= \beta_{c0} \lambda + \sum_{j=1}^I  \beta_{cj} \gamma_i(\lambda ). 
\end{equation}
By default, we initialize at a constant function $\beta_{c}=0$, i.e., $c(\lambda) \equiv 1$. No matter what the prior $c(\lambda)$ is, the conditional distributions  $\theta | a  \propto q(\theta, \lambda= f(a))$ in the joint simulation draws are invariant. This motivates to adaptively changing the pseudo-prior $c(\lambda)$.

For the joint sampling task \eqref{eq_joint_sample},  we will typically be using dynamic Hamiltonian Monte Carlo (HMC) \citep{hoffman:2014:nuts,Betancourt2017A} in Stan, which only requires the unnormalized log density  $\log q(\theta, \lambda) - \log {c(\lambda)}$ as input.

\paragraph{Step 2. Estimating the log normalizing constant from joint draws.}
Thermodynamic integration \citep[][see also Appendix \ref{sec_appendix_Thermodynamic}]{Gelman1998Simulating}  is based on the identity 
\begin{equation}\label{eq_gradient}
\frac{d}{da }\log z( f(a)) =  \E_{\theta| f(a)} \left(\frac{\partial}{\partial a} \log q(\theta, f(a))\right),
\end{equation}
where the expectation is over the invariant  conditional distribution $\theta | a  \propto q(\theta, f(a))$.

The ratio of the normalizing constant can be computed by integrating both sides of \eqref{eq_gradient}. To do this, we rank all the sampled draws according to their $a$ coordinate: $a_{(1)}<a_{(2)}< \cdots < a_{(s^*)}$, and compute the pointwise gradients  
\begin{equation}\label{eq:U_gradient}
U_{(i)}= \frac {\sum_{a_j=a_{(i)}} \frac{ \partial    }{\partial a} \log q(\theta, f(a))\bigr|_{\theta_j, a_j}} {\sum_{j: a_j=a_{(i)}}1  }.
\end{equation}
When there is no tie, the gradient estimate \eqref{eq:U_gradient} essentially  approximates the intractable pointwise integral in \eqref{eq_gradient}, 
$\E_{\theta| f(a_s)} \left( \frac{\partial}{\partial a} \log\left(q(\theta, f(a)\right)    \right)$ by \emph{one} Monte Carlo draw $\frac{\partial}{\partial a} \log\left(q(\theta, f(a)\right)|{\theta_s, a_s}$,  a common technique in stochastic approximation.

The integral of the right hand side of \eqref{eq_gradient} is then computed from these expectation estimates and the trapezoidal rule. For any $a^* \in \mathcal{A}$, we find its covering interval $a^* \in [a_{(i^*)}, a_{(i^*+1)})$, and compute its normalizing constant with reference to $z(f(0))$ by  
\begin{align}\label{path1}
 \begin{split}
 &\log \frac{ z( f(a^{*})) }{z(f(0))}  = \int_{0}^{a^{*}}\frac{d}{da }\log z(f(a)) da =  \int_{0}^{a^{*}} \E_{\theta| f(a)} \left(\frac{\partial}{\partial a} \log\left(q(\theta, f(a)\right) \right) da\\ 
 &\approx  \frac{1}{2} (a_{(1)}-0) (U_{(1)}  + U_{0} ) + 
 \frac{1}{2}\sum_{j=1}^{i^*-1} (a_{(j+1)}  - a_{(j)} ) (U_{(j+1)}  + U_{(j)}  ) +\frac{1}{2} (a^{*}  - a_{(i^*)} ) (U_{(i^*)} + U_{a^*}), 
  \end{split}
\end{align}
where $U_{a^*}$ and $U_{0}$ are obtained by extrapolating  $U$.

\paragraph{Step 3. Parametric regularization and adaptive updates.}
When the normalizing constant $z(\lambda)$ is only required up to a multiplicative factor, we can assume $z(f(0))=1$,  In Section \ref{sec_meta}, we show how to remove the fixed reference by additional self-normalization when the exact normalizing constant is needed.   

Equation \eqref{path1} yields an unbiased estimate of $\log z(\cdot)$. However, due to the stochastic approximation, \eqref{path1} has nonignorable variance in the region where not enough $a_i$ are sampled. For smoothness and regularization,  we approximate $\log z(\cdot)$ in some parametric family according to the ${L}_2$ distance criterion,  $\min \int_0^{1} \left(\log z(\lambda) - \left(\beta_0 \lambda + \sum_{j=1}^J  \beta_j  \gamma_j(\lambda )\right)\right)^2 da$. 
We compute this objective function on a uniform grid with length $I$: $\{\lambda^*_{i}= i/I,  1\leq i \leq I\}$, compute each $\log {z(\lambda^*_{i})}$ using estimate  \eqref{path1}, and solve the least squares regression 
\begin{equation}\label{eq_estimation_beta}
\beta_z= \arg\min_{\beta}\sum_{i=1}^{I}\left(\log z(\lambda^*_{i}) - \left(\beta_0 \lambda^*_{i} + \sum_{j=1}^J  \beta_j  \gamma_j(\lambda^*_{i}) )\right)\right)^2.
\end{equation}
This parametric estimation serves two goals. First, it provides us with a functional form for the prior which we  use in the next step to adaptively modify the sampling distribution. Second, the regression estimate \eqref{eq_estimation_beta} smooths  finite sample noise in \eqref{path1}, which is a bias-variance tradeoff.

We update the functional form of the pseudo-prior $\beta_c\coloneqq \beta_z$, or equivalently  $c(\cdot) \coloneqq z(\cdot)$.

\paragraph{Step 4. Diagnostics,  stopping condition, and mixing.}
The marginal distribution of $a$ from the sampling distribution \eqref{eq_joint_sample} satisfies $p(a)={z(\lambda)}/{c(\lambda)}, \lambda=f(a) $. If $z(\lambda)$ were accurately computed, one step adaptation  $c(a) \coloneqq z(a)$ would result in a uniform marginal distribution on $a$, which is the basis of diagnostics.  

Notably, the sampled marginal density $p(a)$ can be estimated as a normalizing constant   $p(a) = \int_{\Theta} q(\theta, f(a)) c^{-1}(f(a)) d \theta$, thus we use a similar  estimate as \eqref{path1}, only modifying the gradient $U_{(i)}$  by \begin{equation}\label{eq_gradient_marginal}
U_{(i)}^p= \frac  {\sum_{j: a_j=a_{(i)}} \frac{ \partial    }{\partial a} \Bigl(\log q(\theta, f(a)) -  \log c(f(a))\Bigr)\Bigl|_{\theta_j, a_j}} {\sum_{a_j=a_{(i)} }1}. 
\end{equation}

The sample estimates of $z(\lambda)$  and $p(\lambda)$ possess finite sample Monte Carlo error and are prone to over-extrapolation in regions of few $a$
draws. Therefore, we repeat Steps 1--3 until  $p(a)$ is ``functionally close enough" to a uniform density.
However, running until the complete uniformity is both in practice inefficient and  in theory unlikely to be obtained as the actual log normalizing constant $z()$ will not fall into the parametric family exactly.   

Our adaptation step  $z\to c$  can be viewed as an  importance sampling procedure from the joint proposal $c(f(a))^{-1}q(\theta, f(a))$ to the joint target $z(f(a))^{-1}q(\theta, f(a))$.
The importance ratio is  $r(a)=c(f(a))/z(f(a))=1/p(a)$, which only depends on the marginal of $a$, and the normalizing constant estimate \eqref{path1} can be equivalently expressed by the importance sampling estimate $z(\lambda)^{-1}=c(\lambda )^{-1}r(a)$ when the the marginal $p(a)$ is estimated from path sampling.  

To assess the accuracy of the final estimate, we use a Pareto-$\hat k$ diagnostic adapted from Pareto smoothed importance sampling \citep[PSIS,][]{vehtari2015pareto}. We fit the importance ratio $r_i= 1/p(a_i)$ in a generalized Pareto distribution, estimating its right tail shape parameter $\hat k$. As already applied in other computation diagnostics \citep[e.g.,][]{yao2018yes}, $\hat k$ quantifies the Renyi-divergence between the sampled density $c(a)^{-1}q(\theta, a)$ and the target   $z(a)^{-1}q(\theta, a)$ that has a uniform  marginal on $a$.  

When  $\hat k<0.7$, the normalizing constant $z(\lambda)$, viewed pointwise as an importance sampling estimate, is ensured to be reliable with a practical number of simulation draws, and we terminate sampling. The $\hat k$ threshold can be chosen smaller to make the decision more conservative. 

Otherwise, we perform further sampling with the updated pseudo prior $c$.   Crucially, the path sampling estimate \eqref{path1} is always unbiased for the log normalization constant under any sampling distribution as long as $\theta|a$ is left invariant. Thus, we save all previously sampled draws $\{a_s, \theta_s\}$, and mix them with the newly sampled draws 
in the normalizing constant estimation \eqref{path1} during each adaption.  
This remixing step corresponds to a divide-and-conquer strategy that we will further exploit to sample from a metastabe distribution with entropic barriers (Section \ref{sec_meta}).

\subsection{Adaptive continuous tempering: Sample from a multimodal distribution}\label{sec_tempering}
Given a statistical model, we can evaluate the posterior joint density  $p(\theta, y)$= prior$\times$likelihood.  In this section, we suppress the dependence on data $y$ and  denote the unnormalized posterior distribution from which we want to sample as $q(\theta) \coloneqq p(\theta, y), \theta\in \Theta$.  When  $q(\theta)$ exhibits severe multimodality, Markov chain Monte Carlo (MCMC) algorithms  have difficulty moving between modes. The state-of-the-art dynamic Hamiltonian Monte Carlo sampler for a bimodal density has a mixing rate as slow as random-walk Metropolis \citep{mangoubi2018does}, and even optimal tuning and Riemannian metrics do not help. Although it is possible to evade multimodal sampling using other post-processing and re-weighting strategies \citep[e.g.,][]{yao2020stacking}, we aim here to sample from the exact posterior density.  

To ease the energetic barrier between modes, we consider a distribution bridging between the target $q (\theta )$ and a base  distribution  $\psi(\theta)$  through a geometrically tempered  path:
$$
  p(\theta| \lambda)=  \frac{1}{z(\lambda)}  q (\theta )^{ \lambda}  \psi (\theta )^ {1- \lambda},   ~ \lambda \in [0,1],~ \theta\in \Theta, 
$$
where $\lambda$  is the augmented inverse temperature,  $z(\lambda)$ is the normalizing constant $z(\lambda) = \int_{\Theta}  \psi (\theta )^{ \lambda}  q (\theta )^ {1- \lambda}  d  \theta$, and
$\psi (\theta )$ is a proper base probability density, typically a simple initial guess or the prior that is easy to sample from. When $\psi$ is known exactly, $z(0)$ is 1.   We will discuss the choice of base distribution later.  $p(\theta| \lambda)$ is the target density when  $\lambda=1$, and becomes ``flattened" for a smaller $\lambda$.

 \begin{wrapfigure}{r}{7.6cm}
\vspace{-1em}
\includegraphics[width=\linewidth]{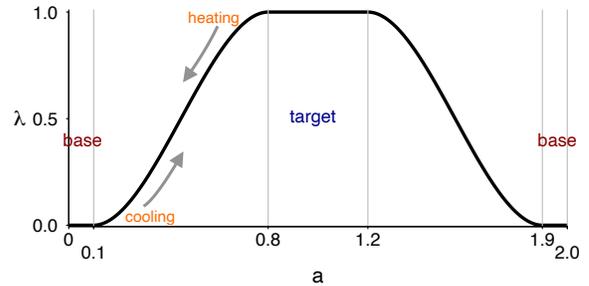}\vspace{-0.7em}\caption{\em   The link function $\lambda=f(a)$.  The flat area between 0.8 and 1.2 allows the continuous sampler to have a region where there are exact draws from the target distribution.} \label{fig_link}
\end{wrapfigure} 

To ensure that the joint sampler can access the
base and target distributions with nonzero probabilities, we define a link function $\lambda=f(a):  [0,2] \to [0,1]$, %Unlike unbounded link function in \cite{Graham2017Continuously} and \cite{betancourt_adiabatic_2014}, 
that is symmetric  $f(a)= f(2-a)$, and flat at two ends: $f(a)=0$ when $0\leq a \leq a_{\min} $  or $2-a_{\min} \leq a \leq 2$,   and $f(a)=1$ when $ a_{\max} \leq   a \leq 2- a_{\max}$.
This is easy to satisfy using a  piecewise polynomial, see  Figure \ref{fig_link} for an illustration.  In the experiment section we use $a_{\min}=0.1$, $a_{\max}=0.8$, and the concrete definition is in Appendix \ref{sec_appendix_link}.  

An $a$-trajectory from 0 to 2  corresponds to a complete $\lambda$ tour  from 0 to 1 (cooling) and back down to 0 (heating).  This formulation allows the sampler to cycle back and forth through the space of $\lambda$ continuously, while ensuring that some of the simulation draws (those with $a$ between 0.8 and 1.2) are drawn from the exact target distribution with $\lambda=1$.

 % We choose $a_{\min }=0.1$ and $a_{\max}=0.8$   Then we require $f(a)=0$  for  $a<a_{\min} $ and $f(a)=1$ for $a>a_{\max}$. For calculation convenience, we require $f(a)$ to be smooth between $a_{\min} $  and  $a_{\max}$.   

%By construction, $z(a)$ is symmetric $z(a')=z(2-a')$ and satisfies $z(0)=z(2)=1$,   $z(a')=z(0)$ for $0\leq a'\leq{a_{\min}=0.1}$, and $z(a^\prime)= z(a_{\max})$ for ${0.8=a_{\max} \leq a'\leq1 }$.
%it is possible that z(a) can be dis-continuous at $\lambda$. For example, $ \psi (\theta )\propto (1/\theta)^2,$ and $ q (\theta )\propto (1/\theta),$  Then $z(\lambda)= \int_{1}^{\infty} (1/\theta)^{\lambda+1} d \theta= \frac{1}{\lambda}$

To run simulated tempering, we apply the adaptive path sampling (Steps 1--4 in Section \ref{sec_method_general}) to the joint distribution,
$$p(\theta, a) \propto  \frac{1} {c(f(a))}  q (\theta )^{ f(a)}  \psi (\theta )^ {1- f(a)},  ~ a \in [0,2], ~\theta\in \Theta.    
$$
During each adaptation, we sample from this joint density, use all existing draws (including from previous adaptions) to obtain the path sampling estimated log normalization constant $\log z$, parametrically  regularize it, and update $\log c$ by $\log \hat z$.  Because $f()$ is designed symmetric, $z(f(a))= z(f(2-a))$, we flip all $a_s$ to be $2-a_s$ for all $a_{s}>2$ during log $z$ estimation \eqref{path1}. 

In addition, the pointwise gradient $U$ in \eqref{eq:U_gradient} is further simplified to be 
$$U_{(i)}= \frac {\sum_{j: a_j=a_{(i)}} f^{\prime}(a_{(i)})  \Bigl(\log q(\theta_{j})- \log \psi (\theta_j) \Bigr) }  {\sum_{j: a_j=a_{(i)}}1  }.
$$
If the base density $\psi (\theta)$ is chosen to be the prior in the model, this gradient is simply the product of $f^{\prime}(a_{(i)})$ and the log likelihood.

When the marginal distribution of $a$ is close to uniform, which is monitored by the Pareto-$\hat k$ diagnostic, a path has been constructed from the base to the target.  We then collect all draws  in the final adaptation  with temperature $\lambda = 1$, i.e. $\{\theta_i \mid f(a_i)=1\}.$  These are the desired draws from the target distribution $q(\theta)$. 
As a byproduct, we obtain the log normalization constant estimate $\log z$, and $z(1)$ equals the  marginal likelihood if $\psi$ is chosen as the prior. The full tempering method is summarized in Algorithm \ref{algorithm}.

\begin{algorithm}[!ht]
\KwIn{ $\psi (\theta), q (\theta) $: the base and (unnormalized) target density;} 
\KwOut{draws from the target distribution; 
$\log z(\cdot)$: log normalizing constant.} 
Initialize pseudo-prior $\log c(\cdot)=0$\; 
\Repeat
{
Pareto-$\hat k$, the estimated tail shape parameter of the ratios $1/p(a_s)$, is smaller than 0.7.
}
{ 
Sample $\{a_s, \theta_s\}_{s=1, \dots, S}$ from the joint density $q(\theta, a)= \frac{1}{c(a)}\psi(\theta)^{f(a)}q(\theta)^{(1- f(a))} $\;
Flip $a_s \coloneqq 2-a_s$ all $a_s>1$\;
Estimate $\log z(\cdot)$  by  path sampling (\ref{path1}), from draws in all adaptations\;
 Estimate $\log p(\cdot)$ by path sampling (\ref{path1}) and gradients \eqref{eq_gradient_marginal}, from the current adaptation\;
 Update $\log c(\cdot) \leftarrow \log z(\cdot)$\;
}
Collecting sample $\{\theta_i | f(a_i)=1\}$ as posterior draws of target distributions.
\caption{\em Continuous tempering with path sampling.} \label {algorithm}
\end{algorithm}

\subsection{Implicit divide and conquer in a metastable distribution}\label{sec_meta}
The proposed continuous simulated tempering algorithm in Section \ref{sec_tempering} alleviates metastablility in energetic barriers. However, tempering is not effective at overcoming purely entropic barriers. %If the target exhibits extreme curvature, the temperature-extended joint distribution will also exhibit extreme curvature.
In such cases, instead of  augmenting the density with an additional temperature variable, we increase the density in the bottleneck region to encourage transitions between metastable regions. 

In many models, it is known that certain marginal distributions are problematic. For example, in hierarchical models,  
the centered parameterization effectively creates left truncation on  the group level standard deviation $\tau$, as the sampler hardly enters the $\tau\approx 0$ region.  We denote the joint distribution $q(\theta, \tau)$, where $\tau$ is the targeted problematic margin and $\theta$ is all remaining parameters.  In other cases, these problematic marginals can be identified by various MCMC diagnostics such as trace plots.  

A conservative solution is to sample $\tau$ first from some ``wider" proposal distribution,  then sample $\theta$ given $\tau$ in a Gibbs fashion, and finally adjust for the extra wide proposal by importance sampling. Here we propose an alternative strategy based on path sampling that does not require the Gibbs step or any closed form conditional density. The method is readily available using the joint sampler in Stan. 

We first sample \emph{some} simulation draws from the joint posterior distribution of all parameters $\tilde q(\theta, \tau)= q(\theta, \tau)$, without requiring a complete exploration.
The marginal density of $\tau$ in the  original model, $p(\tau)$,  is the normalization constant $p(\tau)\propto\int_{\Theta}q(\theta, \tau)d \theta.$ 
Hence, we compute $\log p(\tau)$ over all sampled $\tau$ using path sampling formula \eqref{path1}, and modify the sampling distribution by adding the bias term $\log \tilde q(\theta, \tau)\coloneqq \log  q(\theta, \tau)+ \log\left(p^{\mathrm{targ}}(\tau) / p(\tau)\right)$, where $p^{\mathrm{targ}}(\tau)$ is a desired marginal density. It can be fixed at any distribution that covers the true posterior marginal of $\tau$, such as its prior. 
we discuss other choices in the end of this section.  

We then sample new draws $(\theta, \tau)$ from this adapted  sampling distribution.
Since we only change the marginal distribution of $\tau$ between adaptations, the conditional densities $\theta|\tau$ remain invariant. We mix the new sample with the ones from all previous adaptations and use this cumulative sample to estimate the aggregated marginal density $p(\tau)$ using  path sampling at each adaptation.
We iterate the above procedure until we reach approximate  marginal convergence to $p^\mathrm{targ}(\tau)$, which is further quantified by the Pareto-$\hat k$ diagnostic.

$\tau$ is often undersampled in some regions in early iterations. In these regions, the path sampling estimated $p(\tau)$ is less reliable, and the  $\log\left(p^{\mathrm{targ}}(\tau) / p(\tau)\right)$ term will cause the sampler to focus in these undersampled regions on the next iteration (and mostly ignore those regions that were already sufficiently sampled). This adaptive behavior is what makes this algorithm an implicit divide-and-conquer-type procedure.  Algorithm \ref{alg:idc} shows the complete setup for this procedure.

Once we have obtained our estimate of $p(\tau)$ from the divide-and-conquer algorithm, we can use importance sampling to compute marginal posterior expectations and quantiles. Alternatively, the posterior expectation of any integrable $h(\tau)$  is the normalizing constant of $h(\tau)p(\tau)$, which we can evaluate using path sampling and (one-dimensional)  quadrature \eqref{path1}. In our experiments, we find the latter approach to be more robust. Furthermore, as a byproduct of our marginal density estimate, we can evaluate the marginal distribution function out to arbitrary distances. This allows us to estimate quantiles with extremely small tail probability that may be more difficult to estimate with Monte Carlo draws.

In addition, the path sampling estimate of $p(\tau)$ can be used to diagnose poor sampling behavior in standard HMC by comparing this estimate with the obtained empirical distribution.

%Thus implicit divide-and-conquer offers potential benefits both in terms of general applicability and accuracy for certain estimation problems.

%In Section \ref{sec_exp_idc}, we apply this implicit divide-and-conquer algorithm to the group-level variance parameter in a hierarchical model. Using the problematic centered parametrization as our base model, we compare this scheme to the non-centered parametrization in both quantile and moment estimation tasks.

\begin{algorithm}
\KwIn{$q(\tau, \theta)$:  the (unnormalized) joint density;
$\tau$: the problematic  margin; $\theta$: all remaining parameters;
$p^{\mathrm{targ}}(\tau)$: the targeted marginal of $\tau$.  } 
\KwOut{$p(\tau)$: marginal density of  $\tau$ in the original joint density;  %$\E h(\tau)$: the posterior expectation. 
} 
%Transform $\tau$ by $\tau = f(a)$, such that $a$ is bounded on [0,1]\;
Initialize sampling distribution $\tilde q =q(\theta, \tau)$; $j$=1\;
\Repeat
{ 
The ratios $r=\{p_j(\tau)/p_{j-1}(\tau) \}$  have $\hat k<$  0.7.
} 
{ 
Generate sample $\{\tau_{s}, \theta_{s}\}$ from $\tilde q(\tau, \theta)$ \;
Mix these draws with all previous adaptations $\{\tau_s, \theta_s\}_{s=1}^S$\;
Compute  $p (\tau)$, the  marginal density of $q(\tau,\theta)$,  using path sampling \eqref{path1}, gradients \eqref{eq_gradient_marginal} and all draws\;
Smooth estimated  $p(\tau)$ by regression \eqref{eq_estimation_beta}, and  record $p_j (\cdot)\coloneqq p(\cdot)$\;
%Record local importance weight  $ p(a, \theta) /  \tilde p(a, \theta)$, which only depends on $a$\; 
Update sampling density $\tilde q(\tau, \theta)\coloneqq  q(\tau, \theta) p^{\mathrm{targ}}(\tau) / p (\tau)  $\; 
$j\coloneqq j+1$\;
}
%Use path sampling to compute any  $\E h(\tau)$ as a normalizing constant $\int h(\tau)p(\tau) d \tau$.
%Obtain the marginal density in the actual posterior density $p(a| y)=p^*(a)     $\;
%(optional)Run rejection sampling to obtain thinning draws from the target.
\caption{\em Implicit divide-and-conquer scheme for metastable distributions}  \label{alg:idc}
\end{algorithm}

\paragraph{The optimal marginal distribution of $a$.}
Lastly, the adaptive  path sample estimate  does not depend on the marginal distribution of $a$ and $\lambda$,  so that this marginal  distribution is determined by user specification. By default in \eqref{path1} we use the update rule $z(\cdot) \to c(\cdot)$, which enforces a uniform  marginal distribution on $a$.  More generally, by updating
$c(f(\cdot)) \leftarrow z(f(\cdot))/p^{\mathrm{prior}}(\cdot)$, the final marginal distribution of $a$ will approach $p^{\mathrm{prior}}$.  The choice of $p^{\mathrm{prior}}$  is subject to a \emph{efficiency}-\emph{robustness} trade-off.  \citet{Gelman1998Simulating}  showed that the generalized Jeffreys prior 
$
p^{\mathrm{opt}}(\lambda) \propto \sqrt{\E_{\theta|\lambda}U^2(\theta,\lambda)}
$   minimizes the variance of the estimated log normalizing constant, where $U(\theta,\lambda)= \frac{\partial}{\partial \lambda} \log q(\theta, \lambda)$.
 With a slight twist,  in continuous tempering (Section \ref{sec_tempering}),  we can prove that another optimal prior
$
p^{\mathrm{opt}}(a)   \propto \frac{1}{{f'(a)}} \sqrt{ \mathrm{Var}_{\theta\sim p(\theta|a)}  U(\theta, a) }
$ ensures a smooth KL gap between two adjacent  tempered distribution in posterior sample  $\mathrm{KL}\bigl( \pi_a, \pi_{a+\delta a}  \bigr)\approx \mathrm{constant}$ for $a_{\min}< a <a_{\max}$ (Appendix \ref{sec_prior}). 
In discrete tempering, this constant KL gap is related to a constant acceptance rate in the neighboring Gibbs update. However, due to its dependence on the unknown normalizing constant (and higher orders),  these two efficiency-optimal priors require additional tuning and adaptations.  In continuous tempering, we  prefer the simple uniform  $a$  margin  for robustness, as it guarantees a complete $\lambda$ tour in the joint path.   In implicit divide and conquer, if the  efficiency is more of a concern, we recommend to adaptively update the target marginal to match the efficiency-optimal one $
p^{\mathrm{targ}}(\tau) \leftarrow p^{\mathrm{opt }}(\tau)  \propto \sqrt{ \E_{\theta\sim q(\theta,\tau)}  (\frac{\partial}{\partial \tau} \log q(\theta, \tau))^2}$, which can be further stochastically  approximated  by joint  Monte Carlo draws.  This additional adaptation is optional and the estimation of $p^{\mathrm{opt}}$ is not required to be precise.

\section{Related work: From importance sampling to adaptive importance sampling to Rao-Blackwellization to path sampling to adaptive path sampling}\label{sec_review}
There is a large literature on methods for computing or approximating normalizing constants that cannot be evaluated in closed form. We refer to \citet{Gelman1998Simulating} and  \citet{Lelievre2010Free} for comprehensive reviews on normalizing constants.

\paragraph{Adaptive importance sampling.}
 The basic importance sampling strategy \eqref{eq_importance_sampling} treats the normalizing constant $z(\lambda)$ as a conditional  expectation with respect to the random variable $\theta$, whose distribution is parameterized by $\lambda$.  
 \citet{chatterjee2018sample} proved that under certain conditions that the number of simulation draws required for the importance sampling estimate \eqref{eq_importance_sampling} of $z(\lambda)$ to have small error with high probability is roughly $\exp(\mathrm{KL}(\pi_\lambda ||\pi_{\lambda_0}))$.
 
 When this KL gap is too large, a remedy is to add more \emph{discrete} ladders  $\lambda_0<\lambda_1 <\ldots  <  \lambda_K=\lambda$, and use adaptive importance sampling.  At the $(\!j\!+\!1\!)$-th time,  we sample $\theta_{j1, \dots, jS}$ from $\pi_{\lambda_j}$, and the importance sampling estimate gives $z_{\lambda_{j+1}}/z_{\lambda_{j}}= 1/S \sum_{i=1}^{S} q (\theta_{j,i}, \lambda_{j+1})/ q (\theta_{ji}, \lambda_{j}).$ The final estimation of  normalizing constant is 
\begin{equation}\label{eq_two_point_ip}
\frac{z_{\lambda_K}}{z_{\lambda_0}}  =  \prod_{j=0}^{k-1} \frac{   z_{\lambda_{j+1}} } {   z_{\lambda_{j}}  } \approx \prod_{j=0}^{k-1} \left(1/S \sum_{i=1}^{S}  q (\theta_{ji}, \lambda_{j+1})/ q (\theta_{ji}, \lambda_{j})\right).
\end{equation}  
%It is a non-linear average of exp, the error term again scales exponentially.

\paragraph{Bridge sampling.}
The importance sampling is restricted to sampling from the conditional density $\theta|\lambda$ for a constant $\lambda$ at each run---a slice in the $(\theta, \lambda)$ joint space. 
Bridge sampling simultaneously draws $\theta_{j1, \dots, jS_{j}}$ from $\pi_{\lambda_{j}}$ and  $\theta_{(j+1)1, \dots, (j+1)S_{j+1}}$ from  $\pi_{\lambda_{j+1}}$. 
Given a sequence of  integrable function $\{\alpha_{j}(\cdot)\}_{j=1}^{J}$,  we estimate the ratio of normalizing constant via  the bridge sampling    
%$
%\frac{z_{\lambda_{j+1}}} {z_{\lambda_{j}}} = \frac{\E_{\pi_{\lambda_{j}}}    \left(  \alpha(\theta) q_{\lambda_{j+1} }(\theta)   \right]  } { \E_{\pi_{\lambda_{j+1}}}   \left[  \alpha(\theta)    q_{\lambda_{j} (\theta) } \right)  }
%= \frac{ \sum_{i=1}^{S_j}  q_{\lambda_{j+1}}(\theta_i)  \alpha (\theta_i) / S_j   }   {   \sum_{i=1}^{ S_{j+1}    }  q_{\lambda_{j}}(\theta_i^*)  \alpha (\theta_i^*)    / S_{j+1}}   .   
%$
%The first identity in \eqref{eq_bridge_original} can be extended to a family of reference functions $\{\alpha_{ij}(\cdot)\}_{i=1,j=1}^{J,J}$, 
%$z_{\lambda_{i}} \E_{\pi_{\lambda_i} } \left(\alpha_{ij} (\theta) q_{\lambda_{j}}(\theta) \right) = z_{\lambda_{j} }\E_{\pi_{\lambda_j} } \left(\alpha_{ij} (\theta) q_{\lambda_{i}}(\theta) \right),  \quad \forall i \neq j. 
%$$, and 
  \citep{meng1996simulating} estimate:  

\begin{equation}\label{eq_bridge}
\frac{z_{\lambda_{j+1}}} {z_{\lambda_{j}}} = \frac{\E_{\pi_{\lambda_{j}}}    \left( \alpha_{j}(\theta) q(\theta, \lambda_{j+1})   \right)  } { \E_{\pi_{\lambda_{j+1}}}   \left(  \alpha_{j}(\theta)    q(\theta, \lambda_{j})    \right)  } \approx 
 \frac{ \sum_{i=1}^{S_j}  q (\theta_{ji}, \lambda_{j+1})  \alpha_{j} (\theta_{ji}) / S_j   }   {   \sum_{i=1}^{ S_{j+1}    }  q(\theta_{(j+1)i}, \lambda_{j})  \alpha_{j} (\theta_{(j+1)i})    / S_{j+1}}.   
\end{equation}

%Bridge sampling can viewed as adaptive importance sampling. We denote an intermediate state (as a distribution over $\theta$): $\pi_{\lambda_{i+1/2}}\propto q_{\theta, \lambda_{i}}\alpha_{i}$, then using importance sampling formula twice yields 
%$$
%\frac{z_{\lambda_{j+1}}} {z_{\lambda_{j}}}   = \frac{z_{\lambda_{j+1}}} {z_{\lambda_{j+1/2}}} \frac{z_{\lambda_{j+1/2}}} {z_{\lambda_{j}}}, \quad \frac{z_{\lambda_{j+1/2}}} {z_{\lambda_{j}}} \approx \sum_{i=1}^{S_j}  q (\theta_{ji}, \lambda_{j+1})  \alpha_{j} (\theta_{ji}) / S_j, ~   \frac{z_{\lambda_{j+1/2}}}{z_{\lambda_{j+1}}}  \approx  \sum_{i=1}^{ S_{j+1}    }  q(\theta_{(j+1)i}, \lambda_{j})  \alpha_{j} (\theta_{(j+1)i})    / S_{j+1} 
%$$

Under some independence assumptions,  the optimal choice of $\alpha_{j}(\theta)$ to minimize the variance of multistage bridge sampling estimate \eqref{eq_bridge} is 
$
\alpha^{\mathrm{opt}}_{j}(\cdot)= \frac{n_j  z_j^{-1}   }{ \sum_m  n_m z_m^{-1} q_m (\cdot)}
$ \citep{meng1996simulating, shirts2008statistically}.

\paragraph{Rao-Blackwellization.}
When some $\lambda_k$ are rarely seen, the inverse probability weighting is unstable.  Motivated by the identity $p(\lambda)= \E_{\theta}  p(\lambda |\theta)$,
\cite{Carlson2016Partition}  proposed a Rao-Blackwellized \citep{robert2013monte} estimate of the normalizing constant, 
%$ \hat z(\lambda_k)\propto \frac{1}{n}\sum_{i=1}^n{  \frac{q_k({\theta_i)}} {  \sum_{m=1}^K  q_m (\theta_i) c_m}}  . $
essentially replacing the empirical marginal probability mass function $\mathrm{Pr}(\lambda)$ by a Rao-Blackwellized estimate 
$\mathrm{Pr}(\lambda=\lambda_k)= \sum_{s=1}^S \left(q(\theta_s, \lambda_k)/ \sum_{k'} q(\theta_s, \lambda_{k'})  \right)$.
We show in the appendix that this estimate is equivalent to multistate bridge sampling \eqref{eq_bridge} by taking
%$\alpha_{j} (\theta)=  \frac{n_j  \hat z_j^{-1} }{ \sum_m  c_m^{-1} q_m (\theta) }$, 
an empirical approximation of the theoretical optimum $\alpha^\mathrm{opt}_j$ defined above. 

\paragraph{Non-equilibrium methods.}
The importance sampling and bridge sampling estimates requires $n_j >1$ simulation draws from each $\theta|\lambda_j$. In non-equilibrium methods, we start by a simulation draw $\theta_0$ from equilibrium $\theta|\lambda_0$, evolve it through a sequence of transitions that keeps $\pi_k, k=1,\dots,K$ invariant at each step, and collect one non-equilibrium trajectory $(\theta_0,\dots,  \theta_{K-1})$. Notably, we do not draw from $\pi_K$ directly, and $\theta_j, j\geq 1$ is in general not $\pi_j$ distributed. We still obtain an unbiased estimate \citep{jarzynski1997nonequilibrium, neal2001annealed}: 
\begin{equation}\label{eq_jarzynski}
\frac{z_{\lambda_K}}{z_{\lambda_0}} = \E\left(\exp \mathcal{W}(\theta_0, \dots, \theta_{K-1})\right), \quad
\mathcal{W}(\theta_0, \dots, \theta_{K-1})=\log  \prod_{j=0}^{k-1}\frac{q(\theta_j, \lambda_{j+1})}{q(\theta_j, \lambda_{j})}.    
\end{equation}
where the expectation is over all initial draws and trajectories.

\paragraph{Thermodynamic integration.}
The path sampling estimate \eqref{path1} is unbiased for the log normalizing constant (ratios) $\log z$, while other importance sampling based algorithms are unbiased in the scale of the normalizing constant $z$.  In our adaptive procedure, since we update the \emph{logarithm} of the joint density and compute its gradient in the next Hamiltonian Monte Carlo run, the unbiasedness of $\log z$ is more relevant. In statistical physics, the $\mathcal{W}$ quantity in \eqref{eq_jarzynski} is interpreted  as virtual work  induced on the system. Jensen's inequality leads to $\E \mathcal{W} \geq \log z(\lambda_K)- \log z(\lambda_0)$. This is a microscopic analogy of the second law of thermodynamics: the work entered in the system is always larger than the free energy change, unless the switching is processed infinitely slow, which corresponds to the  thermodynamic integration.

The thermodynamic integration equality \eqref{eq_gradient} was first introduced by \citet{kirkwood1935statistical} in statistical physics, and further refined or applied by \citet*{ogata1989monte, neal1993probabilistic, Gelman1998Simulating, rischard2018unbiased} in the context of normalizing constant computing. However, the typical use of thermodynamic integration requires discretizing $\lambda$ into a fixed quadrature ladder $\lambda_1<\ldots< \lambda_K$, on which the gradient $U_j$ is computed using many draws from $\theta | \lambda_j$.  
These ladders involve further manual tuning \citep{schlitter1991methods, blondel2004ensemble} to control the variance of $U$, and the discretization error (bias) in the numerical integration is non-vanishing in a finite discrete ladder, to a large extent compromising the unbiasedness property of $\log z$ estimation.   

Our present paper also extends the general discussion of path  sampling in \citet{Gelman1998Simulating}, with added steps on parametric regularization, iterative adaptations, and diagnostics. Essentially we use stochastic approximation \citep{robbins1951stochastic} to compute the the pointwise gradient \eqref{eq_gradient}.
We employ a carefully designed link function $f$ that allows direct access to simulation draws from some chosen conditional distributions $\theta|\lambda$, while also eliminates the discretization bias. 
These extensions  facilitate the continuous tempering scheme by providing direct access to draws from the target distribution.

\paragraph{Path sampling as the continuous limit of importance sampling and Taylor expansion.} 
In Section \ref{sec_why}, we describe two separate approaches: the importance sampling estimate \eqref{eq_importance_sampling} and  Taylor series expansion \eqref{eq_score}. They reach the same first order limit when the proposal is infinitely close to the target. That is,  for any fixed $\lambda_0$, as $\delta=|\lambda_1 - \lambda_0|\to 0$,
$$\frac{1}{\delta}\log  \E_{\lambda_{0}}\left(\frac{q(\theta|\lambda_{1})}{q(\theta|\lambda_0)}\right) =   \int_{\Theta} \frac{\partial}{\partial \lambda} \log q(\theta|\lambda_{0}) p(\theta|\lambda_0)d\theta    + o(1)=
\frac{1}{\delta} \E_{\lambda_{0}}\left( \log \frac{q(\theta|\lambda_{1})} {q(\theta|\lambda_0)}\right).$$
The path sampling estimate $\int_{\lambda_0}^{\lambda_1}  \int_{\Theta} \frac{\partial}{\partial \lambda} \log q(\theta|\lambda_{l}) p(\theta|\lambda_l)d\theta d   \lambda$ that we employ is the integral of the dominate term in the middle. In this sense, path sampling  is the continuous limit of both the importance sampling and the Taylor expansion approach. For a more rigorous derivation, see Appendix \ref{sec_limit}.    

More generally, thermodynamic integration \eqref{eq_gradient} can be viewed as the $K\to \infty$ limit of the equilibrium bridge sampling, Rao-Blackwellization, and annealed importance sampling, but without having to fit the conditional model infinitely many times  \citep[for rigorous proofs, see Appendix \ref{sec_rb_bri} and \ref{sec_limit};  see also discussions in][]{Gelman1998Simulating, Carlson2016Partition, Lelievre2010Free}. We will further elaborate in the simulating tempering context that such continuous extension is desired, as otherwise the necessary number of interpolating temperatures $K$ soon blows up when the dimension of $\theta$ increases.

\paragraph{Simulated tempering.}
Simulated tempering and its variants provide an accessible approach to sampling from  a multimodal distribution.  We augment the state space $\Theta$ with an auxiliary  inverse temperature parameter $\lambda$, and employ a sequence of interpolating densities, typically through a power transformation $p_j \propto  p(\theta|y)^{\lambda_j}$ on the ladder $0< \lambda_1<\ldots \lambda_K=1$, such that  $p_K$ is the distribution we want to sample from and $p_0$ is a (proper) base distribution. At a smaller $\lambda$, the between-mode energy barriers in $p(\theta | \lambda)$ collapse and the Markov chains are easier to mix. This dynamic makes the sampler more likely to fully explore the target distribution at $\lambda=1$.

Discrete simulated tempering \citep{marinari_simulated_1992, neal1993probabilistic, geyer1995annealing} samples from the joint distribution $p(\lambda, \theta) \propto  1/ c(\lambda)q(\theta)^\lambda$ using a Gibbs scheme, where  $c(\lambda)$ is a pseudo-prior that is often iteratively assigned to be  
$\hat z(\lambda)$: an estimate of the normalizing constant of $q(\theta)^\lambda$ which may be obtained by any of the methods discussed above.     Each Gibbs swap involves sampling  $\theta|\lambda$ with a one- or multi-step Metropolis update that keeps  $p(\theta|\lambda)$ invariant, and a random walk in $\lambda$ that leaves  $\lambda|\theta$ invariant.  The number of Metropolis updates, the number of temperature samples, and the temperature spacing all involve intensive user tuning.

In simulated annealing \citep{neal1993probabilistic, morris1998automated},  we evolve $\lambda$ through a fixed schedule, and  update $\theta$ using a Markov chain  targeting the  current conditional distribution $p(\theta | \lambda)$. The annealed importance sampling \citep{neal2001annealed}  adjusts the non-equilibrium in the $\theta | \lambda$ update by the importance weight defined in \eqref{eq_jarzynski}.

However, as we mentioned above, discrete tempering schemes are sensitive to the choice of  $\lambda$ ladders.
The Markov chain must usually proceed by making small changes between the neighboring distributions.  Following the discussion in \citet{betancourt_adiabatic_2014}, 
if some pair of $\pi_{\lambda_j}$ and $\pi_{\lambda_{j-1}}$ have a large KL divergence, the error in importance sampling \eqref{eq_two_point_ip} based algorithms will be dominated by the $j$-th term. 
Under the optimal design, the Kullback–Leibler (KL) divergence between neighboring distributions should be roughly constant:
%$$\mathrm{KL}\Bigl( \pi_\lambda, \pi_{\lambda+\delta \lambda}  \Bigr) \approxeq \mathrm{constant} $$
\begin{equation}\label{eq_constant_gap}
 \mathrm{constant}~\approxeq ~\mathrm{KL}\Bigl( \pi_\lambda, \pi_{\lambda+\delta \lambda}  \Bigr) = \int \log \bigl( \frac{\pi_{\lambda}}{\pi_{\lambda+\delta \lambda}}   )d\pi_{\lambda}  
  =\log \bigl(\frac{z(\lambda+\delta \lambda)}{ z(\lambda)}\bigr)-( \delta \lambda)^2   \frac{d}{d\lambda}\log z(\lambda),
\end{equation}
which can hardly be achieved even adaptively due to the reliance on the unknown $\log z$ and its derivative.
%But in practice,  $z(\lambda)$ is unknown, let alone its derivative. %Hence we could expect simulated tempering methods flaws if $z(\lambda)$ varies dramatically. Some history-dependent sampling methods are propose to find a better ladder, but they also increase the computation complexity.

Further, even with a constant KL gap, the discrete ladder imposes dimension limitations.  For example, in simulated tempering, a transition  $(\theta, \lambda_i )\to (\theta, \lambda_j)$ between two temperatures $\lambda_i$ and $\lambda_j$ in the Gibbs update would only be likely if there is significant overlap between the potential energy distributions $ \log q(\theta|\lambda_i)$ and $\log q(\theta|\lambda_j)$. Under a normal approximation with dim$(\Theta)=d$,  the width of the energy distribution scales by  $\mathcal{O}(d^{1/2})$, while the distance between two adjoining energy distributions is $\mathcal{O}(d/K)$. This leads to the best case bound on the necessary number of interpolating densities $K=\mathcal{O}(d^{1/2})$. In practice, $K$ is recommended to grow like $\mathcal{O}(d)$ \citep{madras2003swapping}. More theoretical discussion shows that $K= \mathcal{O}(d)$ is often needed to ensure a polynomial bound on the adjacent temperature overlap and rapid mixing \citep{woodard2009conditions}.
Since the update on $(\lambda_k)_{k=1}^K$ behaves as a random walk, even when the $\theta|\lambda$ update takes zero time,  the relaxation time of a diffusion process with discrete state space $(\lambda_k)_{k=1}^K$ often  scales by $\mathcal{O}(K^2)$, soon becoming unaffordable as $K$ grows.

%\emph{Parallel tempering} (also known as \emph{replica exchange})assign pre-selected weights of temperature to ensure uniformly explored.  guaranteed to be uniform by its design. \citet{Park2008Comparison} proves that simulated tempering always has higher acceptance rates. \citet{zhang2008comparison} show  simulated tempering is more efficient in traversing energy space by approximating the decay rate of auto-correlation in the two temperature  case.  
 
\paragraph{Toward continuous tempering.}
\cite{gobbo2015extended} designed a continuous tempering scheme by  adding
a single auxiliary variable $a\in R$ to the system. The inverse temperature $f(a)$ is defined by a smooth function such that $f(a)=1$ for $|a|<\Delta$  and $f(a)=0.15$ for $|a|>\Delta^*$.  The Hamiltonian of the system is modified to be $\hat H(\theta, p, a, p_a)= f(a) H(\theta, p) +{ p_a^2}/{m_a}+  \log z(a),$ where $p_a$ is the momentum of $a$, and $\log z(a)$ is adaptively updated using importance sampling, in order to force $a$ to be  uniformly distributed in the interval $[\Delta^*, \Delta^*]$.  Similarly, \cite{betancourt_adiabatic_2014}  introduced adiabatic Monte Carlo, where a contact Hamiltonian is defined on the augmented Hamilton system  $\hat  H(\theta, p, \lambda)= -\lambda  \log q (\theta) + 1/2  p^T  M^{-1}  p +\log z(\lambda)$. These methods are shown to outperform discrete tempering, but they require a modified Hamiltonian and are sensitive to task-specific implementation and  tuning.  

 \cite{Graham2017Continuously} formulated a continuous tempering on the joint density $p(\theta, \lambda) \propto {\zeta^{-\lambda}}{   q(x)^\lambda  \psi (x)^{(1-\lambda)}}$.  %which can be sampled either form a standard HMC in $(\theta, a)$ space jointly, or where $a$ and $\theta$ are sampled alternatively in a Gibbs way. %Derived from the joint, the conditional density of $\lambda$ given x is $p(\lambda|\theta) = \frac{\exp(-\lambda \Delta(\theta)) \Delta(\theta) }{ 1-\exp (-\Delta(\theta))  }$ It is a an exponential distribution. with the  rate parameter  $\Delta(\theta)= -  \bigl[  \logq (\theta)-\log \zeta -\log\psi (\theta)]$.
%With joint samples, the single normalizing constant  can be estimated by matching $p(\lambda) \propto \frac{z(\lambda)}{\zeta^{\lambda}}$ and the conditional density of $p(\lambda|\theta)$, which is an exponential distribution.  
This can be viewed as a special case of our proposed method by restricting the parametric form of our normalizing constant estimate to a
single parameter exponential function
$z(\lambda)=\zeta^{\lambda}$.
This method does not directly acquire  simulation draws from $ \theta | \lambda=1$ or   $0$, and integrals under target distribution are evaluated  through importance weighting. 
%It is not aimed to estimate the continuous value of  $z(\lambda)$ in the whole $[0,1]$ interval,   effectively approximating the   $\log z(\lambda)$ by a  single parameter linear function on $\lambda$. The linear relation is unlikely to hold exact in real problems, in which case the Markov chain cannot uniformly explore the $\lambda$ space

 % we replace the restricted single parameter exponential function $\zeta^{\lambda}$,  with a much more general functional form  $c(\lambda)$, entailing both the approximation flexibility, and  the challenge to estimate the unknown normalizing constant $z(\lambda)$ continuously over  $0\leq \lambda\leq 1$, which we solve by adaptive path sampling.  

%\section{Doubly robust Monte Carlo}
%\begin{algorithm} 
%\KwIn{posterior density $p(\theta, \tau |y)$;
%posterior draws $(\theta, \tau)$;  
%an integrable function $h$ that only depends on a one dimensional parameter $\tau$; } 
%\KwOut{posterior CDF and  quantile}
%Run path sampling on $(\theta, \tau)$\;
%Estimate marginal density followed by self-normalizing: $p(\tau_i)= p(\tau_i) / \sum_{i=1}^S p(\tau_i)$\;
%For each point $\tau_i$, compute both the empirical cdf and the self normalized cdf and estimate their variance from two approaches\;
%run a normal-normal model to aggregate two pieces of information. 
%\caption{\em Doubly robust Monte Carlo.} \label {db_MC}
%\end{algorithm}

 \section{Experiments}\label{sec_exp}
To manifest the advantage of the proposed method, we present a series of experiments. In Section \ref{sec_exp_beta}, we use a conjugate model to compare the accuracy of  normalizing constant estimation.
In  Section \ref{sec_exp_gaussian}, we show that the  continuous tempering with path sampling is scalable to high dimensional multimodal posteriors. 
In  Section \ref{sec_exp_ess} and \ref{sec:HS}, we highlight the computational efficiency from the proposed method, including higher effective sample size and quicker mixing time. Lastly, we validate the implicit divide and conquer in a funnel shaped posterior in Section \ref{sec_exp_idc}.
 
\subsection{Beta-binomial example:  comparing accuracy of estimates of the normalizing constant}\label{sec_exp_beta} 
  We start with an example adapted from \citet{betancourt_adiabatic_2014},
  in which the true normalizing constant can be evaluated analytically. Consider a model with a binomial likelihood $y \sim \mathrm{Binomial}(\theta, n)$ and a Beta prior $\theta \sim \mathrm{Beta}(\alpha,\beta)$. Along a geometric bridge between the prior and posterior, the conditional unnormalized tempered posterior density at  an  inverse temperature $\lambda$ is
  $$q(\theta| \lambda) = \mathrm{Binomial}(y |\theta, n) ^{\lambda} \mathrm{Beta} (\theta|\alpha, \beta), \quad \theta\in [0,1], ~ \lambda\in [0,1].$$
Due to conjugacy, the normalized distribution has a closed form expression $p(\theta | \lambda)= \mbox{Beta}(\lambda y+ \alpha, \lambda(n-y)+\beta),$ and the true normalizing constant $z$ is  $$z(\lambda)= \int_0^1 q(\theta | \lambda)d\theta   = \left(\frac{\Gamma(n+1)}{\Gamma(y+1)\Gamma(n-y+1) }\right)^\lambda   \frac{\Gamma(\alpha+\beta)}{\Gamma(\alpha)\Gamma(\beta)}\frac{ \Gamma(\lambda y+\alpha)  \Gamma(\lambda(n-y)+\beta) }{\Gamma(\lambda n +\alpha+\beta)}. $$
  
   \begin{figure} 
\includegraphics[width=\textwidth] {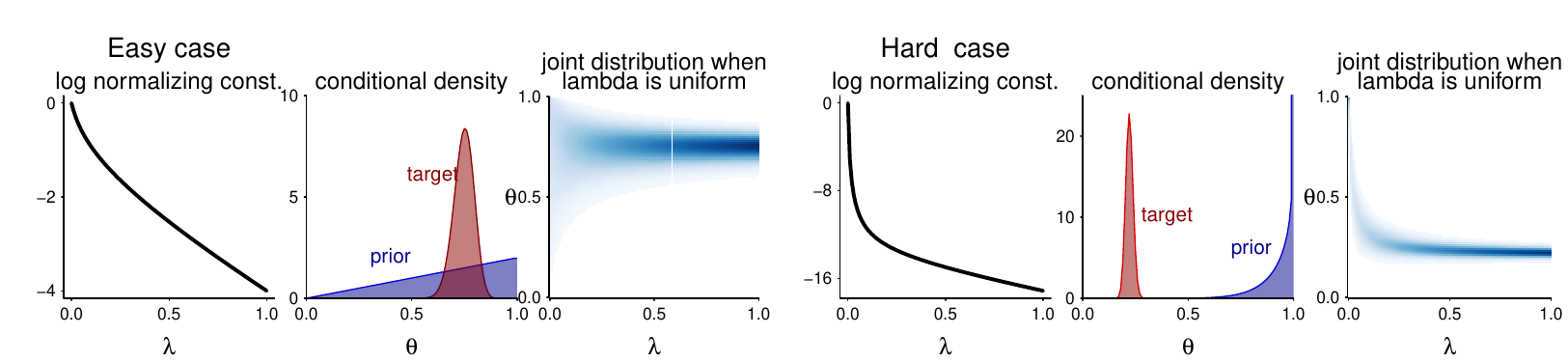}\caption{\em (a) The analytic log normalizing constant. (b) the base and the target (c) the joint of $(\lambda, \theta)$ when the marginal of $\lambda$ is uniform.    In  the hard case the target and base is separated and the log normalizing constant  changes rapidly. } \label{illus}
\end{figure}
  
  \begin{figure} 
 \includegraphics[width=\textwidth] {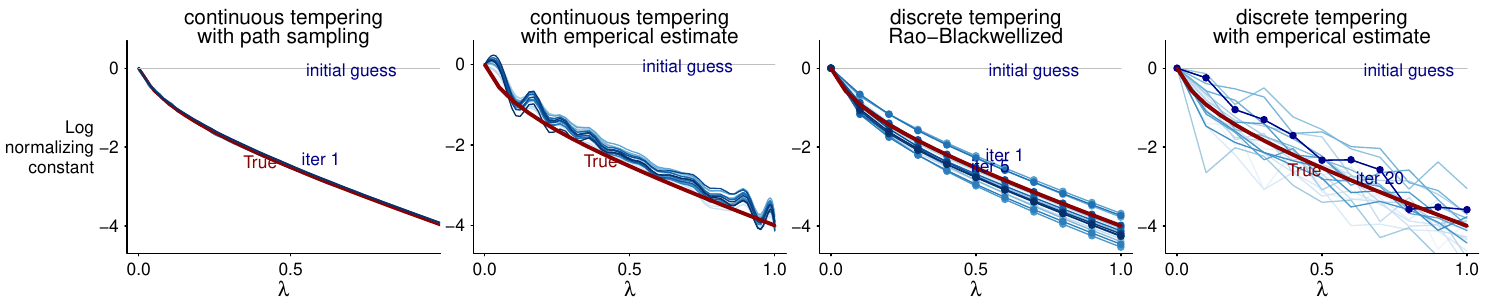}\caption{\em  Estimation of the log normalizing constant in the easy Beta-binomial example among first 20 adaptations.  All  methods eventually approximate the true function, while the proposed continuous tempering with path sampling has the fastest convergences rate.} \label{easy}
 \end{figure}
  
We compare four sample-based estimates of the log normalizing constant:
(i) continuous tempering with adaptive path sampling (the proposed method) with joint density $p(a, \theta) \propto 1/c(f(a)) q (\theta|f(a))$; 
 (ii) continuous tempering, but replacing the path sampling estimate of marginal density $p(a)$ by an
empirical estimate and using importance sampling $\hat z(a)= p(a)c(a)$ to compute and update $z$;   (iii) discrete  simulated tempering with importance sampling estimation and estimating the marginal probability mass function $\mathrm{Pr}(\lambda= \lambda_i), 0= \lambda_0 < \lambda_1<\ldots< \lambda_K=1$ by Monte Carlo average;  
(iv) discrete tempering with a Rao-Blackwellized \citep{Carlson2016Partition} estimate of the probability mass function.

In the first setting (left half  of Figure \ref{illus}), we set $\alpha=2$, $\beta=1$, $y=60$, $n=80$.   
Figure \ref{easy} presents the log normalizing constant estimates in the first 20 adaptations. All methods start with a flat initial guess $\log z(\lambda)=0$. In continuous tempering, we draw 3000 joint $(a, \theta)$ draws in each adaptation. The first half of the draws are treated as warm-up and discarded (which also do in the discrete setting). We choose a length $I=100$ approximation grid in the parametric adaptation  step \eqref{eq_parametric_form}.   For discrete tempering, we use an evenly spaced ladder $\lambda_i =(i-1)/10, i=1,\dots, 11$ and draw 150 $\lambda$ draws per adaptation, each followed 100 HMC updates in $\theta$. These numbers ensure the continuous tempering has a smaller computation cost (3000 joint draws) than in discrete implementations (100 HMC updates on $\theta$ $\times$ 150 updates on $\lambda_i$) per adaptation, so as to make the efficiency comparison convincing. 

   \begin{figure} 
 \includegraphics[width=\textwidth] {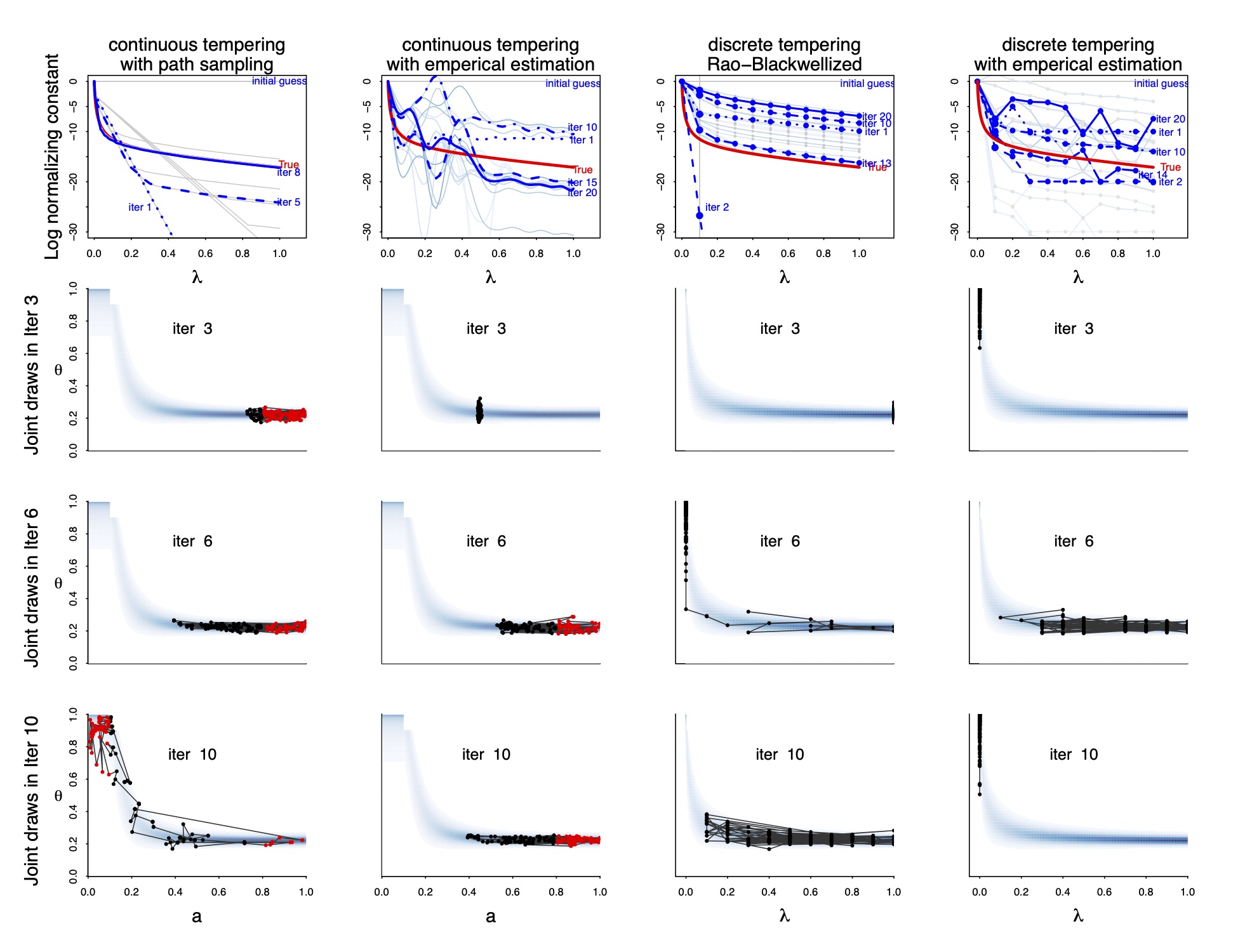}\caption{\em Comparison of four tempering methods in the hard beta-binomial example. Row 1: Starting from a flat guess, only the proposed method converges to the  the true (red curve) value of the log normalizing constant after 8 adaptations.   Rows 2--4 compare the first 150 draws of the joint distribution of parameters and temperatures in adaptations 3, 6, and 10.  The proposed method fully explores the the joint space efficiently, while the two discrete schemes exhibit random walk behaviour in temperatures updates, and cannot adapt to the rapid changing regions near $\lambda=0$.} \label{hard}
 \end{figure}

    \begin{figure} 
 \includegraphics[width=\textwidth] {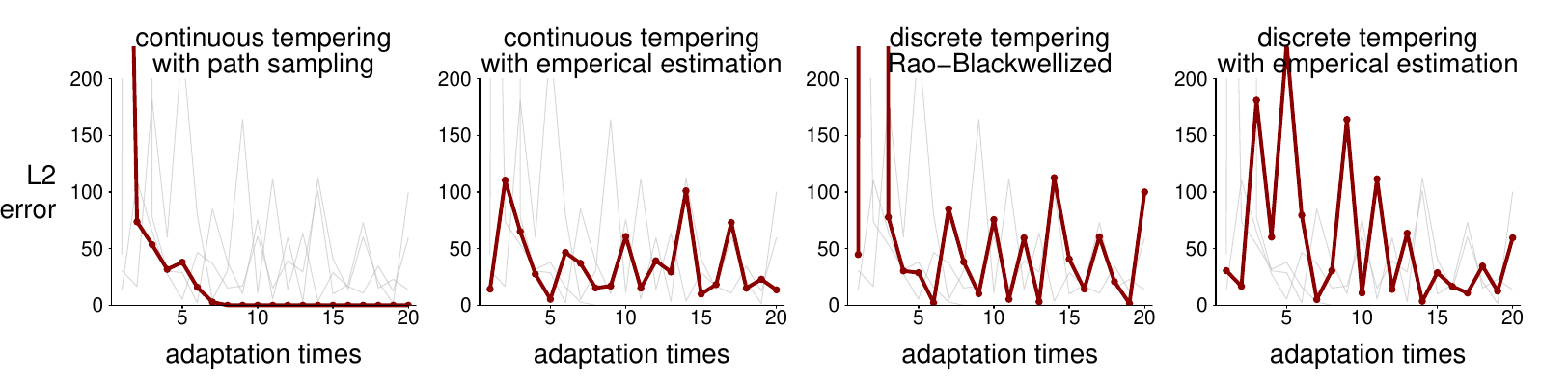}\caption{\em  $L_2$ errors of the log normalizing constant estimate $\log z(\lambda)$ from four methods during the first 20 adaptations.  Only continuous tempering with path sampling  gives a monotonically decreasing error which shrinks to zero in practical amount of time.} \label{L2}
 \end{figure}

In this easy case, all methods eventually approximate the truth(red curve), among which our proposed continuous tempering with path sampling  has the quickest convergence rate, almost recovering the truth within the first adaptation.

In the hard case (right half of Figure \ref{illus}), $\alpha=9$, $\beta=0.75$, $k=115$, $n=550$. The base and target are two isolated spikes, and the log normalizing constant changes rapidly especially for small $\lambda$ . 
 Figure \ref{hard} compares four log normalizing constant estimation methods  in the first 20 adaptations.  Continuous tempering with path sampling nearly converges to the true value after the 8th adaptation.  Rao-Blackwellization achieves a reasonable discrete approximation at the 13-th adaptation, but the result is unstable due to discretization error. In fact, Figure \ref{L2} displays the $L_2$ error of $\log z$. Only the proposed method has a  monotonically decreasing error with more adaptations. The two empirical importance sampling based methods are  both too noisy to be useful.  
 
 Rows 2--4 in  Figure \ref{hard}  show the first 150 joint draws in adaptation 3, 6, and 10 from all four methods. Our proposed method fully explores the joint space in adaptation 10. Here, the joint HMC jumps are gradient-guided and and make subtle jumps in regions of rapid change (where $\lambda\approx 0$). In contrast,  since the conditionals at $\theta| \lambda=0$ and $\lambda=0.1$ differ significantly, the discrete tempering procedure cannot automatically impute more ladders among them, and always over-weighs one end.

\subsection{Sampling from Gaussian mixtures: compare accuracy of the multimodal sampling}\label{sec_exp_gaussian}
Next we consider the problem of sampling from Gaussian mixtures. 
For visual illustration
we begin with the simple case of a mixture of two one-dimensional Gaussians.
%Here, we scale the target density by a large constant so that its log normalizing constant is especially large in magnitude. 
%While a normalizing constant this size may be unrealistic in real problems, this allows us to slow down the tempering scheme so that it requires more iterations to reach convergence. Thus, we can better visualize the process by which the adaptive path sampling algorithm explores the extended parameter space. 

Figure \ref{fig:GMM_1d} shows five out of ten total iterations for which we ran our tempering algorithm. At initialization, the joint sampler can only see a thin slice of $a$ in the region around the base, with a large resulting Pareto-$\hat k$ diagnostic. With more adaptations, more temperatures are sampled, and the path sampling estimate relies less on extrapolation. After four adaptations, the sampler has fully explored $a\in [0,2]$, and completed a base-target bridge in the augmented space.  Collecting simulation draws with $f(a)=1$ retrieves the target distribution. The accuracy is confirmed by the $\hat k<0.7$, or a visual check of a nearly uniform $a$ marginal distribution.

Next we consider a 10-component mixture of Gaussians.  We generate the individual Gaussian components with variance a tenth of the minimum distance between any two of the mode centers to ensures separation between the modes. 
%, which makes the problem more computationally challenging and also allows us to better quantify whether the modes have all been adequately explored. 
We perform this sampling in a range of dimensions from 10 to 100 (with the number of components fixed).   We compare the proposed tempering with path sampling with two benchmark algorithms (i) Rao-Blackwellized discrete tempering \citep{Carlson2016Partition}, and  (ii) continuous tempering with a log linear normalizing constant assumption \citep{Graham2017Continuously}. We do not include other empirical importance sampling based tempering as they are strictly worse than these two benchmarks. 
For all methods, we use an over-dispersed normal base distribution.
%each with 10,000 draws from the joint distribution. 
%Figure \ref{fig:GMM_example_2} displays the evolution of the normalizing constant estimates and the marginal distributions of the inverse temperature variable for the continuous tempering algorithm in 100 dimensions. Within about three iterations, the estimates of the normalizing constants have converged to their limits and the estimates of the marginal distribution on the inverse temperature have converged to a uniform distribution. The histograms lag behind the estimated marginals because they include the cumulative draws up through each iteration. Eventually the accumulated draws from the uniform marginal swamps the earlier draws.

%\begin{figure}
%    \centering
%    \includegraphics[width=\linewidth]{fig/cauchy.png}
%    \caption{\em Illustration of path sampling in one-dimensional mixture model.}
%    \label{fig:GMM_1d}
%\end{figure}

%\begin{figure}
%    \centering
%    \includegraphics[scale=0.4]{fig/GMM/norm_ct}
%    \includegraphics[scale=0.35]{fig/GMM/marg_ct}
%    \caption{\em Normalizing constant (top row) and marginal density (bottom row) estimates for the %transformed inverse temperature variable across 30 iterations of continuous tempering sampling from a %100-dimensional Gaussian mixture model with 10 isotropic mixture components.}
%    \label{fig:GMM_example_1}
%\end{figure}
\begin{figure}
    \centering
    \includegraphics[scale=0.65]{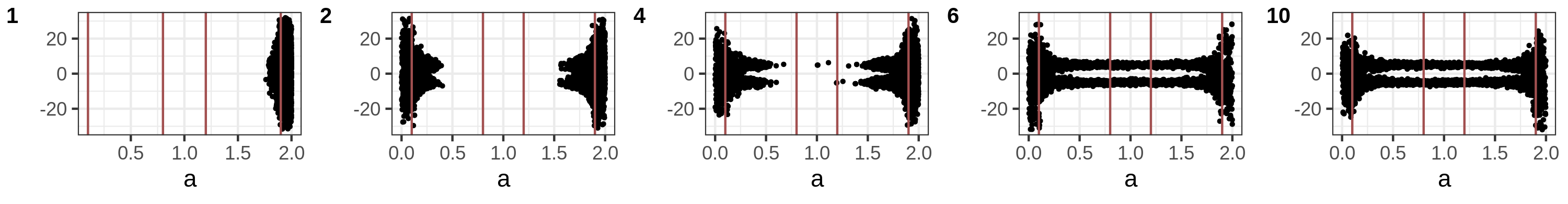}
    \includegraphics[scale=0.65]{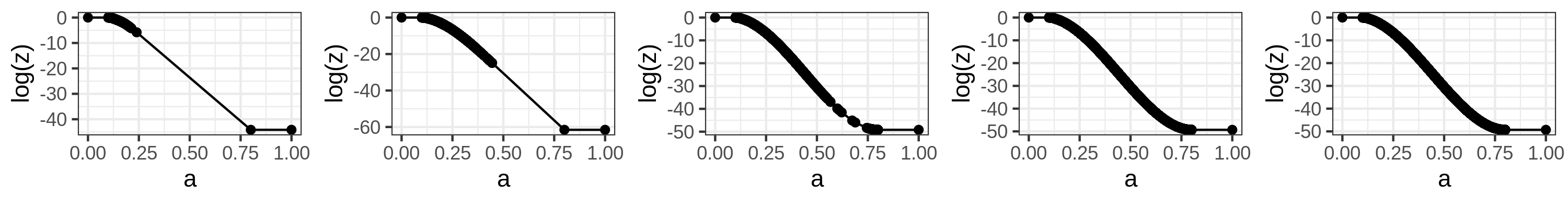}
    \includegraphics[scale=0.65]{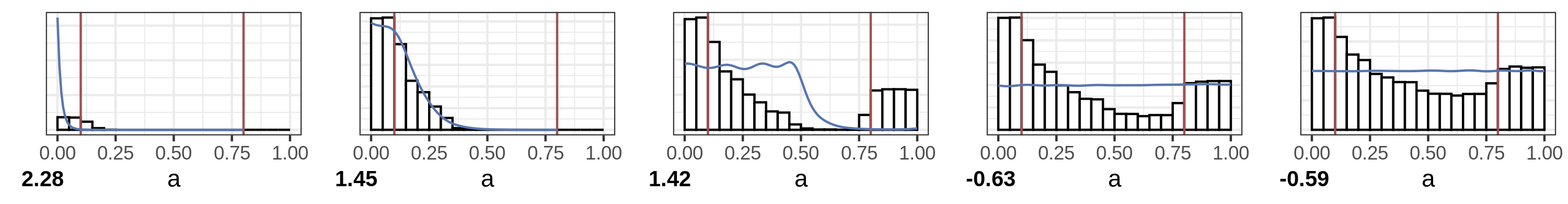}
    \includegraphics[scale=0.65]{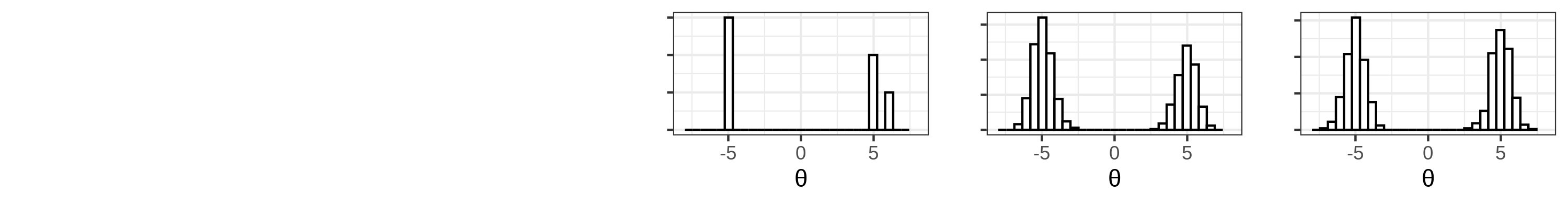}
    \caption{\em Path sampling-based tempering for the target a Gaussian mixture,  plotting adaptation 1, 2, 4, 6, and 10.  The first row shows the joint simulation draws;  the second row displays the estimate of the log normalizing constant; the third row displays the marginal density on the temperature and cumulative draws of the temperature variable, with Pareto-$\hat k$ diagnostics printed at the bottom left  of each panel (good if $\hat k <0.7$); and the fourth row is the draws from the target density. }
    \label{fig:GMM_1d}
\end{figure}
\begin{figure}
    \centering
    \includegraphics[scale=0.45]{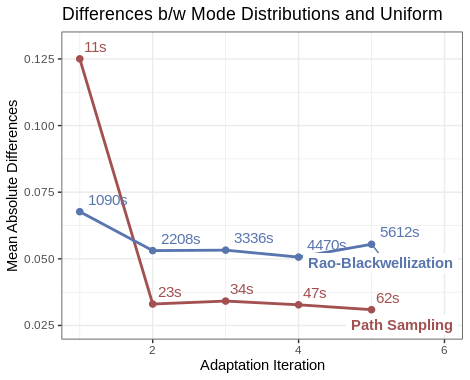}
    \includegraphics[scale=0.45]{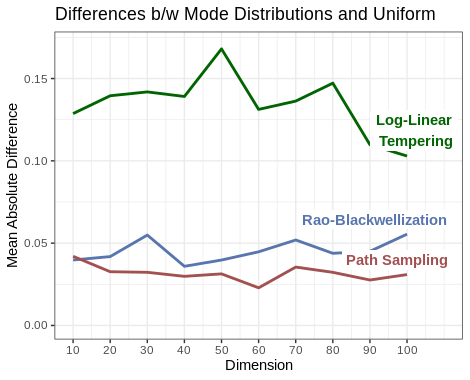}
    \includegraphics[scale=0.56]{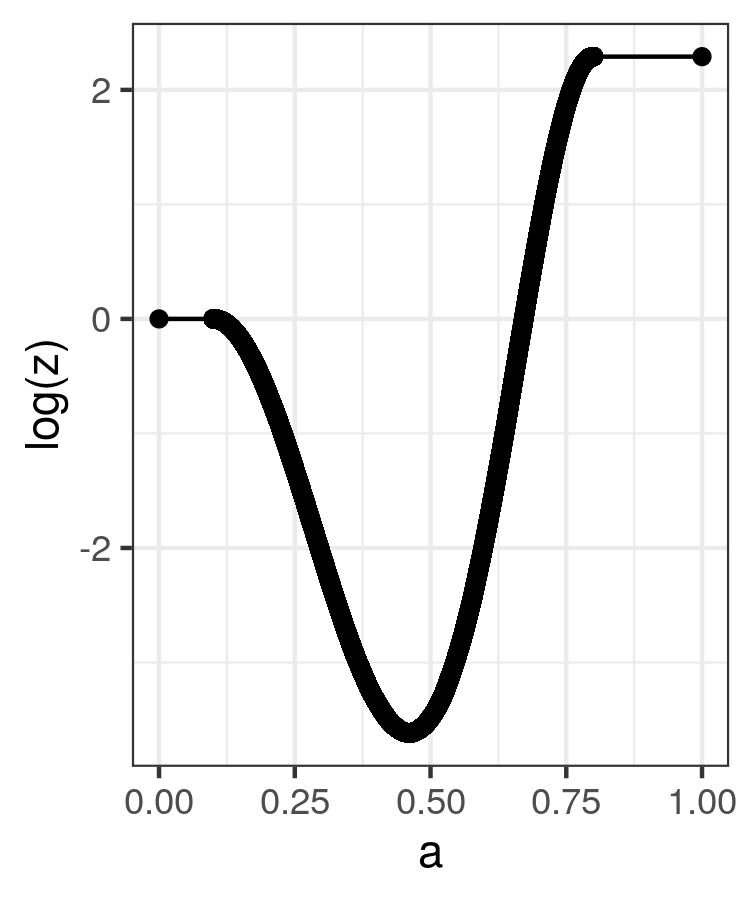}
    \caption{\em Mean absolute error of target simulation draws from adaptive path sampling based tempering (proposed), and two benchmarks: log-linear continuous tempering and Rao-Blackwellization.   Results are averages over 5 repeated runs.  Left: Sampling error and time at each adaptation for a Gaussian mixture target with 100 dimensions and 10 components. Numerical labels indicate the number of seconds of CPU time before each adaptation completed. Middle: Comparison of sampling errors as dimensions range from $d=10$ to $100$.  The log-linear tempering is supplied with true normalization constant while the other two are with uniform initialization. 
    Right: The estimated log normalizing constant log $z (f(a))$ from path sampling.     
  }
    \label{fig:GMM_example_2}
\end{figure}

In order to assess whether a given tempering procedure has succeeded, we count the proportion of draws in each target mixture component for each tempering method, and compare this distribution to the actual equal weight target.  The results are averaged over five independent runs.  
%in this high-dimensional problem, we assign each sample to the mode center which it is closest too. Since each of the components has equal weight, these assignments would be close to uniform on the ten mode labels if the sample were iid from the target distribution.
Figure \ref{fig:GMM_example_2} displays the evolution of 
the mean absolute difference between the sampled mode proportions in each adaptation of each algorithm and the target in the 100-dimensional case.
%the proportions of sampled points assigned to each mode for both continuous tempering with path sampling and the discrete Rao-Blackwellized procedure 
%In particular, we plot the mean absolute difference between the observed mode proportions in each adaptation iteration of each algorithm and uniform proportions.
The left panel of Figure \ref{fig:GMM_example_2} labels each point with the total CPU time needed to complete each adaptation iteration. Not only does the path sampling algorithm achieve more uniform mode exploration, but it does so in significantly less computation time than the Rao-Blackwellized procedure.
This is partly a symptom of the HMC-in-Gibbs implementation of the Rao-Blackwellized algorithm. 
%A fully Gibbs implementation may run faster in certain problems, 
%such an approach sacrifices the wide applicability of Stan's general purpose dynamic HMC sampler. By contrast, our proposed path sampling-based tempering procedure can be implemented generically for any model that can be otherwise sampled in Stan (or indeed in any HMC implementation in which gradients are available).

The middle panel of Figure \ref{fig:GMM_example_2} displays the same mean absolute difference between sampled and actual target distribution against the dimension
for continuous tempering with path sampling, the Rao-Blackwellized procedure, both with a uniform initialization, and the log linear tempering initialized at the true normalizing constant. For path sampling and Rao-Blackwellized sampling, we plot the results after five iterations of adaptation.  In all cases, the final in-target sample from the proposed method is closest to the actual target, with the the smallest mean absolute differences. As shown in the right panel, the normalization constant $\log z(f(a))$ is not monotone or linear, which explains the undestied performance of  the log linear tempering even when it is supplied with the ground truth normalizing constant.
%On the other hand, both the path sampling and Rao-Blackwellized algorithms achieve reasonable mode exploration. However, the mean absolute differences of the mode proportions are between 25\% and 100\% larger for the Rao-Blackwellized procedure.

In Figure \ref{fig:GMM_example_2}, we do not see the mode exploration degrade with increasing dimension for path sampling. This is because the log normalizing constant itself scales mildly with the dimension in Gaussian mixtures.  It general, the number of dimensions is not the limiting factor of path sampling, but it could inflate the  log normalizing constant in a severe base-target mismatch. We further discuss dimension scalability in Appendix C.

\subsection{Flower target: the gain on computation efficiency}\label{sec_exp_ess}

Next we consider sampling from a flower shaped distribution as previously used in \citet{sejdinovic2014kernel} and \citet{nemeth2019pseudo}. This is a two-dimensional distribution with probability density spread throughout multiple ``petals''. Its probability density function is given by
\[
p(x,y\mid \sigma,r,A,\omega) \propto \exp\left(-\frac{1}{2\sigma^2}\left(\sqrt{x^2+y^2} - r -A\cos(\omega\arctan(y,x))\right)\right).
\]
The probability mass is pinched into narrow regions between petals, slowing exploration of the target. For this reason, the flower distribution is challenging to sample from using standard HMC. 

Figure \ref{fig:flower_example_1} plots draws from the flower distribution with 6 petals using plain HMC and using continuous tempering with path sampling. As before, we use a normal base distribution. Both algorithms were run for enough time to generate a similar number of draws from the target. Standard HMC clearly fails to adequately explore each of the petals of the flower distribution. Path sampling-based continuous tempering succeeds in generating draws from each of the petals in roughly equal proportions.
\begin{figure}[!b]
    \centering
    \includegraphics[scale=0.33]{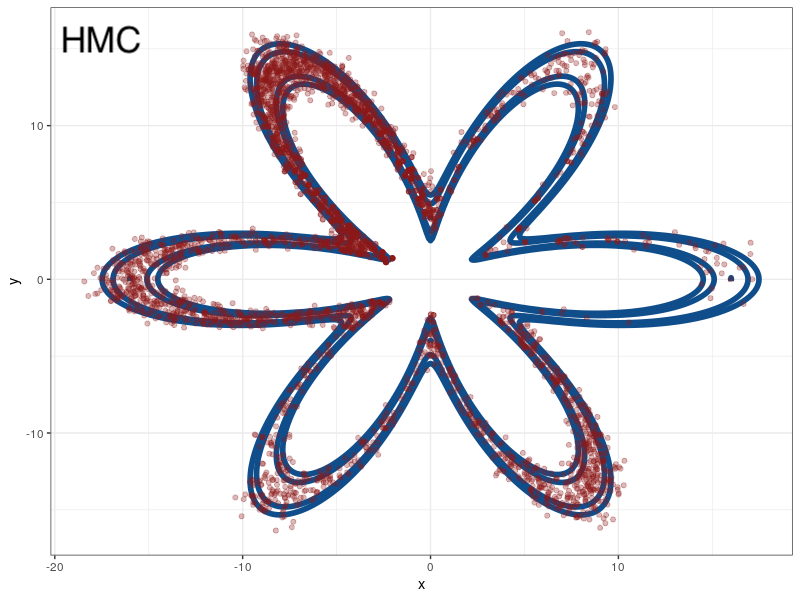}
    \includegraphics[scale=0.33]{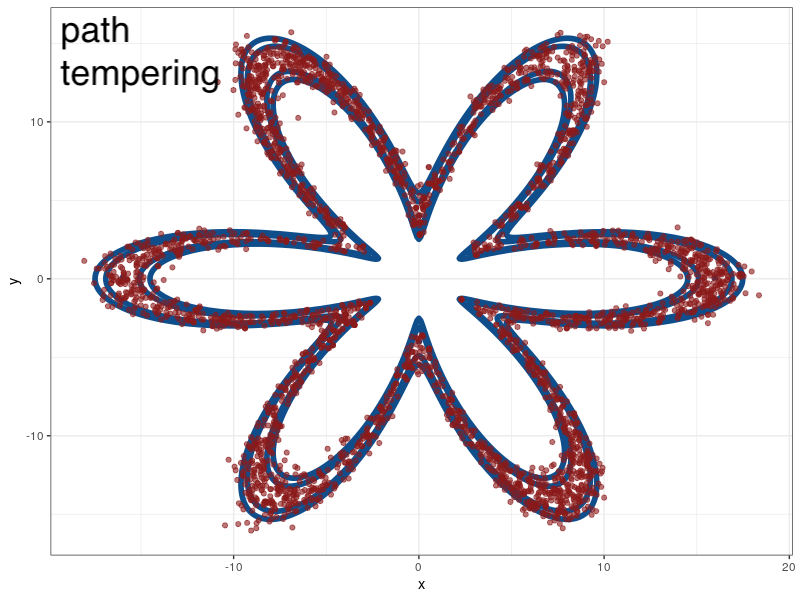}
    \caption{\em  Draws (red dots) from the flower target generated by plain HMC (left) and continuous tempering with path sampling (right). The blue contours represent the underlying target density. }
    \label{fig:flower_example_1}
\end{figure}
\begin{figure}
    \centering
    \includegraphics[scale=0.55]{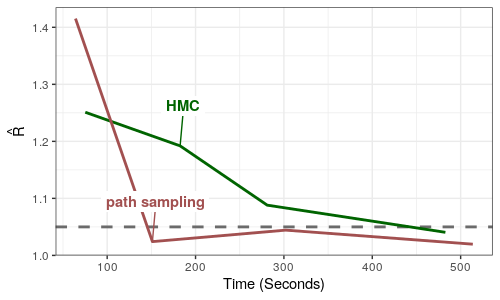}
    \includegraphics[scale=0.55]{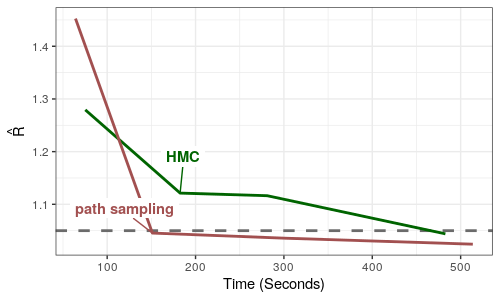}
    \caption{\em Comparison of computational cost of path sampling versus standard HMC for a flower distribution with 40 petals. The x-axis displays the time in seconds, and the y-axis shows the achieved $\widehat{R}$-value for the x coordinate (left) and y coordinate (right) of the flower distribution. The dashed grey line displays the 1.05 cutoff below which we consider our chains to have mixed sufficiently. All results are averaged over 15 replications.}
    \label{fig:flower_time_comp}
\end{figure}

We further compare the total mixing time. Figure \ref{fig:flower_time_comp} plots the computed $\hat{R}$  \citep{vehtari2019rank}  for the $x$ and $y$ coordinates of a challenging flower distribution with 40 petals against computational time. Continuous tempering with path sampling crosses the $\hat{R}=1.05$ threshold for adequate mixing in about a quarter of the time (already having included all adaptations) required for standard HMC to reach that threshold, although the latter can eventually approximate the target with many more draws. Thus, despite the much lower time cost per sample for standard HMC, path sampling still manages to be more efficient in terms of total mixing time. This disparity will only get more apparent as the difficulty of the target density increases.

%\begin{figure}
%    \centering
%    \includegraphics[scale=0.35]{fig/flower/marg_ct}
%    \includegraphics[scale=0.35]{fig/flower/marg_rb}
%    \includegraphics[scale=0.35]{}
%    \includegraphics[scale=0.35]{fig/flower/marg_de}
%    \caption{\em Estimated marginal distributions of the (transformed) inverse temperature variable across 10 iterations of each of the four algorithms. From the top row, the displayed algorithms are: continuous tempering with path sampling, discrete tempering with Rao-Blackwellization, continuous tempering with empirical estimation, and discrete tempering with empirical estimation.}
%    \label{fig:flower_example_2}
%\end{figure}

\subsection{Regularized horseshoe regression: expand models continuously and efficiently}\label{sec:HS}
The proposed path sampling and continuous tempering can enhance computational efficiency even when there is no direct posterior multimodality.  In particular, the HMC mixing time scales polynomially with dimensions, hence fitting a slightly larger model can be cheaper than fitting two models separately. 
Consider a  sparse regression with  
regularized horseshoe prior \citep{piironen2016hyperprior, piironen2017sparsity}, an effective tool for Bayesian sparse regression. Letting $y_{1:n}$ be a binary outcome and $x_{n\times D}$ be predictors,
the regression coefficient $\beta$ is equipped with a  regularized horseshoe prior:
$
  \beta_d \mid \tau, \zeta, \gamma \sim  \mbox{normal}\left( 0,     {\gamma\zeta_d}{(\gamma^2 + \tau^2 \zeta_d^2)^{-1/2}}\right), 
	%\gamma^2 \sim  \mathrm{Inv\!\!-\!\!Gamma} (\alpha,\beta), 
	~ \tau \sim \mathrm{Cauchy}^+ \left(0,  {2}/ {\left(D-1)\sqrt{n}\right)}  \right), ~ \zeta_d \sim \mathrm{Cauchy}^+(0,1),  ~ d=1, \dots, D. 
$
To take into account the model freedom between the logit link $\mathrm{Pr}(y_i=1\mid \beta, \mathrm{logit})=   \mathrm{logit}^{-1}\! \left(   \sum_{d=1}^{D} \beta_d x_{id} \right)$  and  probit link $\mathrm{Pr}(y_i=1 \mid \beta, \mathrm{probit})=   \Phi\left(   \sum_{d=1}^{D} \beta_d x_{id} \right)$ in the likelihood,  we construct a tempered path between the logit and probit model  
$$p(a, \beta,\tau, \zeta, \gamma ) \propto \frac{1}{c(a)} \prod_{i=1}^n \left( \mathrm{Pr}(y=y_i|\beta, \mathrm{logit} )^{1-\lambda} \mathrm{Pr} (y=y_i|\beta, \mathrm{probit})^{\lambda}\right) p^{\mathrm{prior}}(\beta,\tau, \zeta, \gamma),~ \lambda=f(a),$$
where $p^{\mathrm{prior}}(\beta,\tau, \zeta, \gamma)$ encodes the regularized horseshoe prior.

In the first experiment, we generate $n=40$ data points and $D=100$ covariates with a  maximum pairwise correlation 0.5. Among all covariates, only the first three have nonzero coefficients $\beta_{1,2,3}$. Furthermore, $y$ is chosen to have  a logit link in the true data generating process.  We vary $a_{\min}$ in the link function $\lambda=f(a)$ as defined in Equation \eqref{eq_link},  such that 
when  $0<a<a_{\min}, \lambda=0$, the path sampling draws from the logit model, and when  $1>a>a_{\max}=1-a_{\min}, \lambda=1$, it draws from the probit model. We run path sampling with two adaptations,  $S=500$ draws in the first adaptation, and $S=3000$ in the second.  We then compute the tail effective sample size \citep*[tail-ESS,][]{vehtari2019rank} from the the probit and logit model in the aggregated draws  divided by the total sampling time (sum of 4 chains).  For compassion, we also fit these two models separately and count their effective sample size. To reflect the  bottleneck of the computation, we monitor the smallest tail effective sample size among all regression coefficients $\beta_d$. 
Path sampling between two models expands the model continuously such that both individual models are special cases of the augmented model. Because the posterior distributions $\beta|y$ under the probit and logit links are similar,  a connected path between 
them stabilizes the tail of posterior sampling, resulting in a larger unit time tail effective sample size, as shown in the left two panels of Figure \ref{fig_Horeshoe}.

In the second experiment, we fix $a_{\min}=0.35, a_{\max}= 0.65$, and vary the input dimension $D=n\rho, 2\leq \rho\leq 8$.  Given that $a_{\max}- a_{\min}=30$\% of the sampling time is spent on intermediate models, it is remarkable that most times the joint sample from path sampling renders a tail effective sample size from individual models no slower than fitting them separately, as verified in the right two panels in Figure \ref{fig_Horeshoe}. The simulation results are averaged over 10 repeated runs. 
As a  caveat, a joint path of two arbitrary  models is not always more efficient than separate fits. Figure \ref{fig_Horeshoe}  displays  the minimal  effective sample size among all regression coefficients $\beta_d$.  When averaged over all $\beta_d$, the mean tail effective sample size in path sampling is smaller  than individual fits in this example, which  can be viewed as a parameter-wise efficiency-robustness trade-off. 
 \begin{figure}
 \centering
    \includegraphics[height=1.63in]{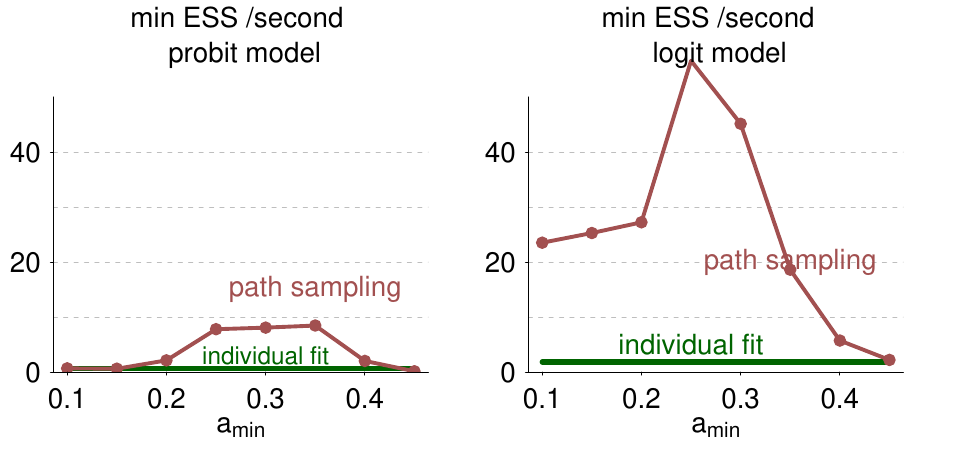} 
        \includegraphics[height=1.63in]{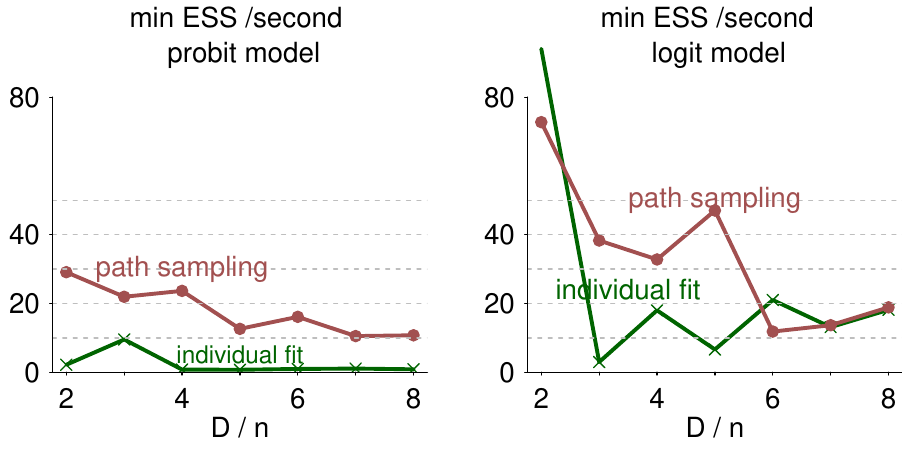} 
    \caption{\em The smallest unit time tail effective sample size among all regression coefficients in the horseshoe regression with both Bernoulli logit and probit likelihood. The green line refers to two separate model fits and the  red line corresponds to a joint tempered path sampling counting all adaptation time. Left two panels:
    we vary $a_{\min}$: the proportion of logit and probit sampling in the final joint sample.  Right two panels: we fix  $a_{\min}=0.35, a_{\max}=0.65$ and vary the input-dimension/sample-size from 2 to 8. In most cases, the joint path sampler generates a larger tail ESS /s for both probit and logit sample, in addition to all intermediate models fitted simultaneously.}\label{fig_Horeshoe}
\end{figure}

\subsection{Sampling from funnel shaped posteriors by implicit divide-and-conquer}\label{sec_exp_idc}
We apply the implicit divide-and-conquer algorithm to the hierarchical model and eight-school dataset \citep{gelman2013bayesian}. In this problem, we are given eight observed means $y_i$ and standard deviations $s_i$ which are related to each other through the following hierarchical model with unknown parameters $\theta, \mu, \tau$,
\begin{equation*}
\mathrm{centered~parameterization}:\qquad    y_i\sim \mathrm{normal}(\theta_i,s_i), \quad \theta_i \stackrel{iid}{\sim} \mathrm{normal}(\mu,\tau).
\end{equation*}
%For the purposes of this problem, we let $\mu$ and $\tau$ have improper uniform priors, which gives a proper posterior distribution.
The  centered parametrization in hierarchical models often behaves as implicit left-truncation due to entropic barriers.   In particular, as $\tau\to 0$, the mass of the conditional posterior $p(\theta\mid\tau,y,s)$ concentrates around $\mu$, creating a funnel-shape in the posterior of $\theta$ that is challenging to traverse. %The result of this problem is that HMC will tend to under-explore a neighborhood of $\tau=0$.
In this dataset, the left truncation can be overcome by switching to a non-centered parametrization that introduces auxiliary variables $\tilde{\theta}$,
\begin{equation*}\label{eq:noncent_param}
\mathrm{non\!-\!centered~parameterization}: \qquad \theta_i = \mu+\tau\times\tilde{\theta}_i, \quad \tilde{\theta}\stackrel{iid}{\sim}\mathrm{normal}(0,1).
\end{equation*}
However, reparametrization is often restrited to location scale families, and 
choosing between centered and non-centered parametrization remains difficult for real data \citep{gorinova2019automatic}.

 \begin{figure}
     \centering
     \includegraphics[scale=0.47]{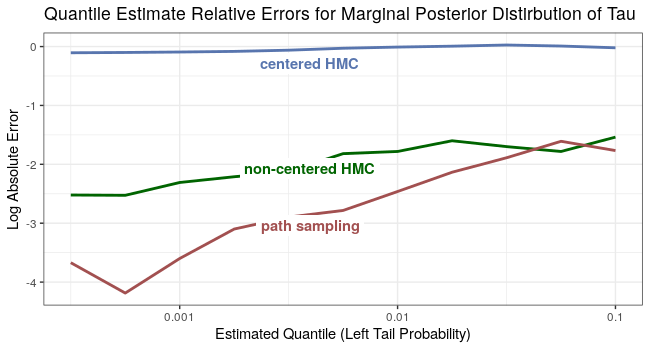}
     \includegraphics[scale=0.47]{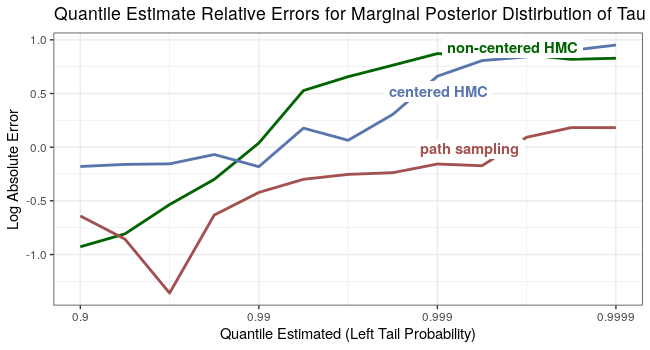}
     \caption{\em Comparison of log estimation error of the left and right tail quantile estimations using $S=4000$ posterior draws from HMC (centered and non-centered parametrizations) and implicit divide-and-conquer. The x-axis displays the left tail probability. The y-axis displays the log absolute error of the quantile estimate. The proposed method (red curve) achieves the highest accuracy.}
     \label{fig:IDC_right_tail}
 \end{figure}

   \begin{wrapfigure}[22]{r}{8cm}
    \includegraphics[width=\linewidth]{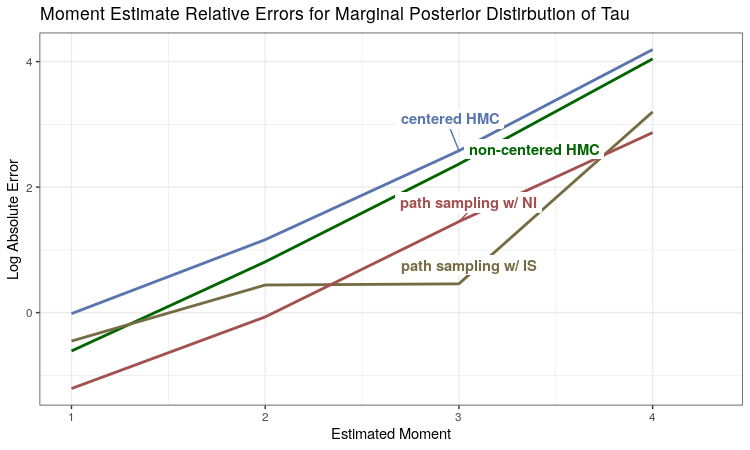}
    \caption{\em Comparison of log  estimation errors of the first to fourth posterior moment of $\tau$ using draws $S=4000$ from HMC (centered and non-centered parametrizations) or implicit divide-and-conquer. The latter returns the marginal density estimation, which is further equipped with  either importance sampling or numerical integration. The x-axis displays the estimated moment, and the y-axis displays the log  absolute error of the moment estimate.}
    \label{fig:post_moments}
\end{wrapfigure}

We apply the proposed implicit divide-and-conquer algorithm on the sample drawn from the original centered parametrization. It successively pushes the marginal of $\tau$ towards a desired target marginal $p^\mathrm{targ}(\tau)$. In our case, we use the efficiency-optimal prior of the form
$
 p(\tau) \propto \sqrt{\mathbb{E}_{\tau}U^2(\theta,\mu,\tau)},
$
where the $U(\theta,\mu,\tau)$ functions are the gradients of the log posterior given in \eqref{eq:U_gradient}. This target cannot be computed explicitly, so we also estimate it at each adaptation using a simple window-based averaging scheme over the $\tau$ draws. We adaptively repeat the algorithm until we meet the convergence criterion. 
%We consider the algorithm converged when successive estimated marginal densities meet two conditions: their supports (i.e. the regions over which we have obtained samples) must be sufficiently close and their maximum pointwise absolute difference must be sufficiently small.

Figure \ref{fig:IDC_right_tail} displays the log errors for quantile estimates in the right and left tails of the marginal posterior of $\tau$ for HMC with (a) centered and (b) non-centered parametrizations,  and  (c) the implicit divide-and-conquer. We set posterior draw size $S=4000$ (after thinning) in all three samples.
In these experiments, we use estimates from the non-centered parametrization with $10^6$ draws as our ground truth. In the right tail, the implicit divide-and-conquer runs out-perform the Monte Carlo estimates across all but the smallest quantile.
For the left tail, we estimate quantiles with left tail probabilities between $0.001$ and $0.1$. In this case, path sampling dominates all other methods, with a significant improvement in the extreme quantiles.

%Implicit divide-and-conquer struggles with the smaller quantiles because the gradient of the log posterior contains a $1/\tau$ factor. This causes the variance of the path sampling estimates to blow up as $\tau\to 0$. Given this, in order to obtain convergence, we had to truncate the samples below $\tau = 0.1$. This creates significantly less distortion than the centered parametrization in HMC (which generated no samples below $\tau = 0.75$), but this still introduces error into the estimates of the smallest quantiles.

The estimated marginal density $p(\tau)$ enables expectation computation. 
We examine the performance for marginal moment estimation $\E(\tau^m)$
using both importance sampling and numerical integration (the same trapezoidal rule as in \eqref{path1}).
%To perform the importance sampling, we use our final estimate of the marginal posterior distribution as the target. % However, we have an additional choice to make when selecting the form of the proposal density. 
Figure \ref{fig:post_moments} displays the log errors of estimates for the first four moments of the marginal posterior distribution of $\tau$.  The  moments computed from simulation draws of the implicit divide-and-conquer, using either numerical integration scheme or importance sampling,  have a lower error than Monte Carlo estimates from both HMC parametrizations. 

%On the one hand, combining the samples from the separate iterations of divide-and-conquer, we know that these samples follow a mixture distribution that we can compute exactly. On the other hand, because we run implicit divide-and-conquer with the centered paramtrization, we know in advance that the generated samples will not follow this distribution when $\tau$ is near zero. Instead, since divide-and-conquer generates a large number of samples, we can hope to represent their distribution well with kernel density estimation. We find that this approach was uniformly better in our tests, so we present the KDE-based approach here.

Overall, we see from Figures \ref{fig:IDC_right_tail} and \ref{fig:post_moments} that the proposed implicit divide-and-conquer provides good performance for a range of posterior estimation tasks, often out-performing all other approaches and never generating estimates with errors large enough to be unusable in any of the tested problems. This is remarkable since the implicit divide-and-conquer scheme generates its underlying HMC samples using the centered parametrization during each adaptation.

The implicit divide-and-conquer scheme demands more  computation time than one-run HMC, but it avoids the need for a problem-specific reparametrization, so it can be applied in cases where other approaches may not be available. Furthermore, because it comes with a stopping criterion, we can ensure that the extra computation time  arrives at a sufficiently accurate result by termination.

  \section{Discussion}
 \subsection{Initialization and base measurement}
In continuous simulated tempering, we initialize the pseudo prior at $c=1$. When the underling normalizing constant $z(\lambda)$ is several orders of magnitude smaller than $z(0)$ for all  $\lambda> 0$, the sampler starting from this non-informative initialization will get stuck in $\lambda=0$. Because $f'(a)=0$ in the base, path sampling will fail to update.  In this situation, we update the slope $b_0$ in \eqref{eq_parametric_form} to be the importance sampling estimate \eqref{eq_importance_sampling}: $b_0 \xleftarrow[]{} \log \hat z^{\mathrm{IS}}(1)= \log \left( \frac{1}{S} \sum_{s=1}^S  q(\theta_s,  \lambda=1)q^{-1}(\theta_s, \lambda=0) \right)$. This estimate $\hat z^{\mathrm{IS}}$ is unlikely to be accurate, thus we only use for initialization to avoid local convergence.

In Section \ref{sec_tempering}, we merely require the base measure $\psi$ to have the same support as the target $q$.  A natural candidate for $\psi$ is the prior as we have used throughout the experiments.
Ideally, the base measurement should balance between being easy to sample from, and close enough to the target distribution. This is not a unique challenge in our method, as all (discrete) simulated tempering and, more generally, importance sampling methods need to construct a good base measurement. The current paper treats the base measurement as an extra input that user has to specify.  Based on the discussion on how to construct and optimize the proposal distribution in adaptive importance sampling \citep{geweke1989bayesian, owen2000safe, bugallo2017adaptive, paananen2019pushing}, it may be possible to obtain better performance by optimizing the choice of base measurement, which itself is often an iterative problem. 

That said, we are unlikely to be able to automatically construct a good base density for importance sampling in high dimensional problems such that the KL divergence between the base and the target is bounded. Otherwise other advanced sampling method would not be required. Fortunately, the adaptive path sampling estimate is less sensitive to base-target discrepancy. As we have seen in experiments, our path sampling-based method still yields accurate log normalizing constant  estimates and smooth joint sampling even starting from a crudely chosen base distribution when importance sampling and bridge sampling have failed.

 \subsection{Dimension limitations}
In accordance with the error analysis in Section \ref{sec_review} and the simulation results, the proposed adaptive path sampling is more scalable to higher dimensions compared with existing counterparts.  However, path sampling-based continuous tempering  can still fail in high dimensional posteriors. 

In essence, simulated tempering depends on, but can be more difficult than normalizing constant estimation. On one hand, simulated tempering does not require a  precise estimation of $z(\lambda)$ \citep*[left half of Figure \ref{fig:logz_vs_tempering}, see also][]{geyer1995annealing}, as long as there is enough posterior marginal density $p(a)$ or $p(\lambda)$ everywhere---but this will only happen when the scale of $z(\lambda)$ is small.

In Appendix C, we provide a failure mode example in a latent Dirichlet allocation model, where  the log normalizing constant estimation has a scale $\sim 10^{4}$, and path sampling estimation manages to estimate it with pointwise errors $\sim 10^{2}$. For fitting a curve, a 1\% error is accurate enough. But a pointwise 1\% error in the log normalizing constant amounts to inflating the marginal density $p(a)$ in the joint sampling \eqref{eq_joint_sample} by a factor $\exp(10^2)$ at that point, which effectively becomes a point mass and makes path tempering get stuck in one region.  Such failures happen in discrete tempering too, and is identifiable by our $\hat{k}$ diagnostics.
In other words, successful simulated tempering requires the estimation of the pointwise normalizing constant with multiplicative precision, see Figure \ref{fig:logz_vs_tempering} for an illustration.  

\begin{figure}
    \centering
    \includegraphics[width=\linewidth]{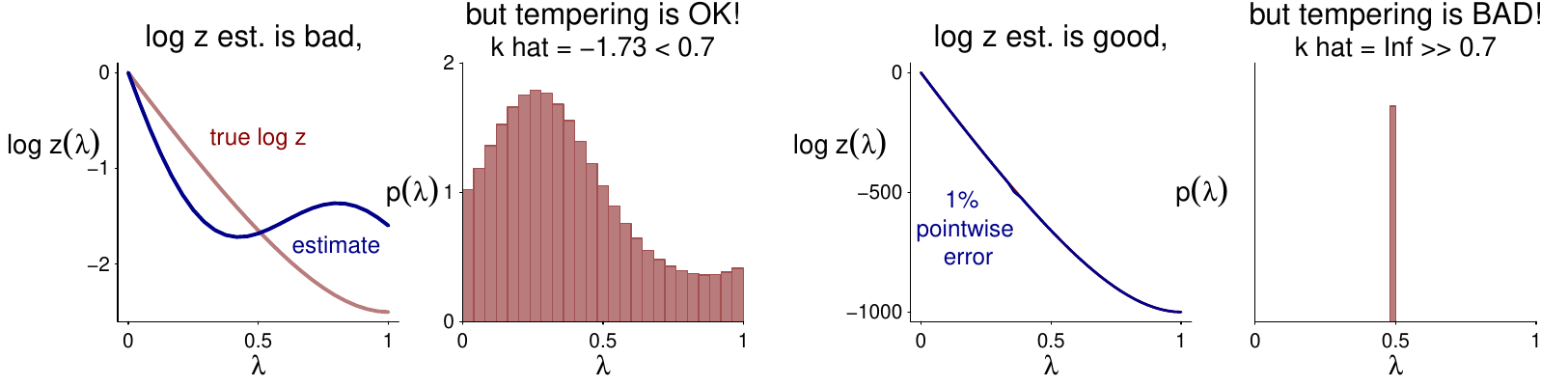}
    \caption{\em An illustration of the different orientations in normalizing constant estimation and simulated tempering. In the left two panels, the blue curve poorly fits the true log normalizing constant, but there is enough mass everywhere in the  resulting marginal density to ensure a complete path in simulated tempering. The right two panels: with a much larger scale, when log z estimation is off by 1\% at a single point, the resulting path becomes a point mass in $\lambda$. In continuous tempering, such failure is diagnosed by a large $\hat{k}$.}
    \label{fig:logz_vs_tempering}
\end{figure}

The absolute scale of the log normalizing constant is comparable to the log KL divergence between the base and the target.  In a prior-posterior tempering path, the log normalizing constant at $\lambda=1$ is the log marginal likelihood. It grows linearly with both the sample size and how closely the model fits the data (the log likelihood). When the model is poor in predication (as in the latent Dirichlet allocation example), the log normalizing constant soon escapes from the estimation accuracy that any estimate can achieve, an analogy of the ``folk theorem of statistical computing'': When you have computational problems, often there’s a problem with your model \citep{gelman2008folk}.

Furthermore,  we present  a geometric path for continuous tempering. 
It is not clear if to modify the free energy by a density-power-transformation is the best way to remove metastability, although most existing tempering methods adopt this form.
If there is rapid phase transition at some critical temperature, any power-transformation  tempering will  be prohibitively slow \citep{bhatnagar2004torpid}. Fortunately,  the general framework in Section \ref{sec_method_general} permits an arbitrary path formulations, and is easily implemented in Stan by replacing the closed form gradient $U$= log likelihood by automatic differentiation. 
We leave this more flexible tempering path for future research.   

\subsection{General recommendations}
We have developed a method that integrates adaptive path sampling and tempering and have applied it to several examples. The procedure is highly automated, and we have implemented it in an R package using the general-purpose Bayesian inference engine Stan, returning both the desired posterior  sample and the estimated log normalizing constant at convergence. See Appendix B for software details. 
 
In Bayesian computation, the ultimate goal is not to stop at the posterior simulation draws, but to use them to check and improve the model in a workflow.  If data come from an identifiable  model, then with reasonable sample size we can expect to distinguish among parameters and obtain a well behaved posterior distribution (eventually achieving asymptotic normality).  From this perspective, multimodal posteriors should be unlikely with large sample size and may represent data that do not fit the model. Hence, it is crucial to check the model fit even after the target posterior is obtained from our proposed sampling algorithms. 

Finally, although we have argued its relative advantage over existing methods, we do not think the proposed method based on adaptive sampling and tempering can solve all metasable sampling problems, either because of a badly-chosen base measurement, or the dimension and sample size limitation imposed by tempering itself. In this case, the Pareto-$\hat k$ diagnostic is still useful to understand why the method fails. Besides refining the base measurement and potentially modifying the model, another alternative strategy to metastable sampling is to use  cross validation and multi-chain stacking \citep{yao2020stacking}  to combine the non-mixed simulation draws, which in effect changes the target distribution.

 \section*{Acknowledgements}
 We thank Ben Bales, Rok Češnovar,  M\aa ns Magnusson for helpful comments and the U.S. National Science Foundation, Institute of Education Sciences, Office of Naval Research,  Sloan Foundation, and Schmidt Futures for partial support.

Replication \texttt{R} and \texttt{Stan} code for experiments is available at  
\url{https://github.com/yao-yl/path-tempering}.

Corresponds to Y.Y.\texttt{<yy2619@columbia.edu>}.

\bibliographystyle{apalike}
\bibliography{tempering}

	\normalfont
	\normalsize

\appendix
\renewcommand\thesection{\Alph{section}}
	\setcounter{section}{1}
\section*{Appendix A.  Derivation of equations}
\subsection{Thermodynamic integration identity}\label{sec_appendix_Thermodynamic}
Let $q: \R^{d+1} \rightarrow \R$ be an unnomralized density function.
Let the normalizing constant (function) $z:\R \rightarrow \R$ be defined by
$
z(\lambda) = \int_{\R^d} q(\lambda, \theta) d\theta,
$
where $\theta\in\R^d$. Calculus shows 
\begin{equation}\label{eq_thermodynamic}
\begin{split}
\frac{d}{d\lambda} \log z(\lambda) 
&= \frac{\frac{d}{d\lambda} \displaystyle \int_{\R^d} q(\lambda, \theta) d\theta}
{\displaystyle \int_{\R^d} q(\lambda, \theta) d\theta} \\
&= \frac{\displaystyle \int_{\R^d} \frac{\partial}{\partial \lambda}  q(\lambda, \theta) d\theta}
{z(\lambda)} \\
&= \frac{\displaystyle \int_{\R^d} 
\frac{\partial}{\partial \lambda} \log q(\lambda, \theta) q(\lambda,\theta) d\theta}
{z(\lambda)} \\
&= \displaystyle \int_{\R^d} 
\frac{\partial}{\partial \lambda} \log q(\lambda, \theta) \frac{q(\lambda,\theta)}
{z(\lambda)}  d\theta, \end{split}
\end{equation}
That is, 
$\frac{d}{d\lambda} \log z(\lambda) = \E_{\theta|\lambda}(\frac{\partial}{\partial \lambda}
\log q(\lambda, \theta))$, which  leads to the thermodynamic integration identity \eqref{eq_score} we use by chain rule.

At the second equation of \eqref{eq_thermodynamic}, we are assuming the legitimacy of changing the order of derivative and integral. One sufficient condition is that there exists a sufficiently large constant $M$ such that  $\int_{\R^d} |\frac{\partial}{\partial \lambda}  q(\lambda, \theta) | d\theta < M$ for all $\lambda$, and the interchangeability follows by the dominated convergence theorem.

\subsection{The link function in continuous tempering}\label{sec_appendix_link}
In continuous tempering,  we choose the following piecewise polynomial link function (see Figure \ref{fig_link} for a visualization):
\begin{equation}\label{eq_link}
f(a)= \begin{cases}
0,     &0\leq a <  a_{\min}, \\
 -2 (\frac{a-a_{\min}}{a_{\max}-a_{\min}})^3 + 3 (\frac{a-a_{\min}}{a_{\max}-a_{\min}})^2,    &a_{\min} \leq a < a_{\max}, \\
1,    &a_{\max} \leq a < 2-a_{\max},\\
 -2 (\frac{2-a_{\min}-a}{a_{\max}-a_{\min}})^3 + 3 (\frac{2-a_{\min}-a}{a_{\max}-a_{\min}})^2,    &2-a_{\max} \leq a <2-a_{\min}, \\
0,    & 2-a_{\min}\leq a  \leq 2.
\end{cases}
\end{equation}
It has a continuous first order derivative.  In experiments and default software implementation, we set $a_{\min}=0.1$ and $a_{\max}=0.8$.

  \subsection{Rao-Blackwellization as the optimal multistate bridge sampling}\label{sec_rb_bri}

In this subsection we follow the notation of bridge sampling discussed in \citet{Lelievre2010Free}.  Let $q_i= q(\theta|\lambda_i)$ be the unnormalized density at a sequence of discrete temperatures $\lambda_i$, $i=1, \dots I$, and   
  $Z= (z_2,\dots, z_I)^T$   the (ratio of) normalizing constants (assuming $z_1=1$).   With an arbitrary list of $\Theta\to \R$ functions $\alpha_{i,j}$,  we define  $A$ an $(I-1)\times(I-1)$ matrix,  and B an $(I-1)$ vector:
  $$A= \begin{bmatrix}
  a_{2}       & -b_{23} & \dots & -b_{2I} \\
  -b_{32}       & a_3 &  \dots & -b_{3I} \\
  \hdotsfor{4} \\
  -b_{I2}       & -b_{I3} &  \dots & a_I
  \end{bmatrix}, 
  \quad
  B= \begin{bmatrix}
  b_{21}\\
  b_{31}  \\
  \vdots \\
  b_{I1}
  \end{bmatrix},
  $$
  with each entry $a,b$ a shorthand for 
  $$b_{ij} = \E_{\pi_j} (\alpha_{ij} q_i),  \quad a_i=\sum_{j=1, j\neq i}^I\E_{\pi_i}(\alpha_{ij} q_j).$$
In discrete tempering, each time we sample from $q(\theta, \lambda)=\frac{1}{c(\lambda)}q(\theta, \lambda)$. Since $\lambda$ is discrete here, we denote $c_m= c(\lambda_{m})$ and  $q_m= q(\theta, \lambda= \lambda_{m})$,  both of which are given in each adaptation.  \citet{Carlson2016Partition} considered the  Rao-Blackwellized estimate of the normalizing constant:
$$\hat z^{RB}_{k}  =    \sum_{l=1}^n \frac{q_k (\theta_l)}{ \sum_{j=1}^I  c_j q_j(\theta_l) },\quad  k=1, \dots, I. $$
\begin{proposition}
The Rao-Blackwellized estimate 
can be derived from  multistate bridge sampling by choosing
$$ \alpha_{ij} (\theta)=  \frac{n_j  \hat z_j^{-1} }{ \sum_m  c_m^{-1} q_m (\theta)  }, $$
which is further an empirical estimate of the optimal bridge sampling functions. 
\end{proposition}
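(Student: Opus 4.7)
The plan is to unfold the multistate bridge sampling identity with the specific $\alpha_{ij}$ given in the proposition, show that the empirical estimating equation collapses to the Rao-Blackwellized formula, and then separately verify that this $\alpha$ is the sample-based surrogate for the variance-optimal one.

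First I would recall from \eqref{eq_bridge} that for any bridge functions $\alpha_{ij}$ the ratios $z_i/z_j$ satisfy the Meng--Wong identity, which in the multistate setting can be rewritten (as in the MBAR equations of Shirts and Chodera) as the self-consistent form $\hat z_i = \sum_l q_i(\theta_l) / \bigl( \sum_m n_m \hat z_m^{-1} q_m(\theta_l) \bigr)$, with the outer sum pooling across all $N=\sum_m n_m$ draws. Substituting $\alpha_{ij}(\theta) = n_j \hat z_j^{-1}/\sum_m c_m^{-1} q_m(\theta)$ into the empirical $\hat b_{ij}$ and $\hat a_i$ has the effect of replacing $n_m \hat z_m^{-1}$ in that denominator by $c_m^{-1}$; because the new denominator no longer depends on $(i,j)$, the $(I-1)\times(I-1)$ bridge-sampling linear system $AZ = B$ decouples, and reading off row $k$ gives $\hat z_k = \sum_l q_k(\theta_l) / \sum_m c_m^{-1} q_m(\theta_l)$. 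This matches the Rao-Blackwellized estimator once the convention that the statement $\hat z_k^{RB}$ writes the inverse pseudo-prior in the denominator is acknowledged.

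For the empirical-optimality claim I would compare the proposed $\alpha$ with the optimum $\alpha^{\mathrm{opt}}_{ij}(\theta) = n_j z_j^{-1} / \sum_m n_m z_m^{-1} q_m(\theta)$ quoted in the paper. Under the joint tempering density $p(\theta,\lambda)\propto q(\theta,\lambda)/c(\lambda)$, the marginal satisfies $\Pr(\lambda=\lambda_m) \propto z_m/c_m$, so the realised counts concentrate around $n_m \approx N z_m/(c_m \sum_k z_k/c_k)$. Substituting this into $\alpha^{\mathrm{opt}}$ converts $n_m z_m^{-1}$ into $c_m^{-1}$ up to an overall constant that cancels between numerator and denominator of $\alpha$, and replacing the remaining $z_j$ in the numerator by $\hat z_j$ yields exactly the $\alpha$ in the proposition. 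This is the sense in which the chosen $\alpha$ is the plug-in version of the variance-optimal bridge function.

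The main obstacle I expect is purely bookkeeping: carefully tracking proportionality constants across the linear system, verifying that the per-temperature empirical averages $\hat b_{ij}$, $\hat a_i$ pool correctly into a single sum over all draws, and making sure the decoupling of $AZ = B$ really occurs because the denominator $\sum_m c_m^{-1} q_m(\theta_l)$ factors out of every row. Once these algebraic identities are pinned down, both halves of the proposition reduce to direct substitution.
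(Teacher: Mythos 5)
Your proposal follows essentially the same route as the paper: plug the given $\alpha_{ij}$ into the empirical Meng--Wong/multistate bridge equations $AZ=B$, observe that each row collapses (via the identification $c_m^{-1}\propto n_m\hat z_m^{-1}$, so the mixture denominator reconstitutes itself) to $\hat z_k=\sum_l q_k(\theta_l)/\sum_m c_m^{-1}q_m(\theta_l)$, and read off the chosen $\alpha$ as the plug-in version of the Shirts--Chodera optimal bridge functions. Your second step, deriving $n_m\approx N z_m/(c_m\sum_k z_k/c_k)$ from the tempering marginal to justify that identification, is a slightly more explicit account of what the paper leaves implicit (and both you and the paper share the same $c_m$ versus $c_m^{-1}$ bookkeeping convention issue), but the argument is the same.
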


\begin{proof}
 We rearranged the  multistate bridge sampling estimates
  $z_i b_{ji}= z_j b_{ij}, ~ \forall i,j$
into a matrix form 
  $$AZ=B.$$ 
 
\citet{shirts2008statistically} showed that the optimal sequence  of functions that minimizes the variance of the estimated $Z$ is 
$$\alpha_{ij} (\theta)=  \frac{n_j  z_j^{-1} }{ \sum_m  n_m z_m^{-1} q_m (\theta)  }, $$ 
which in practice, starting from some initial guess $\hat z$,  this can be approximated by 
   $\alpha_{ij} (\theta)=  \frac{n_j  \hat z_j^{-1} }{ \sum_m  n_m \hat z_m^{-1} q_m (\theta)  }. $
%which leads to a multistate bridge sampling estimate
%  $$\hat z_{i}  =    \sum_{m=1}^n \frac{q_i (\theta_i)}{ \sum_{j=1}^I n_j \hat z_j^{-1} q_j(\theta^m) } .$$

  Denote $S(\theta)= { \sum_m  c_m q_m (\theta)  }$, then   $ \alpha_{ij} (\theta)=\frac{n_j  \hat z_j^{-1}} {S(\theta)}$. We estimate the matrices $A$ and $B$ by their empirical means.
  \begin{align*}
  \hat a_i \hat z_i &=  n_i^{-1}  \sum_{k=1}^{n_i} \hat z_i  \sum_{j=1, j\neq i}^I   \alpha_{ij}(\theta_{i, k} )q_j (\theta_{i,k}) \\
  &=\hat z_i n_i^{-1}    \sum_{k=1}^{n_i} \frac{  \sum_{j=1, j\neq i}^I  n_j  \hat z_j^{-1} q_j (\theta)  }{S(\theta_{i,k})}\\
  %&=\hat z_i n_i^{-1}    \sum_{k=1}^{n_i} \left(    1-   \frac{n_i z_{i}^{-1} q_i(\theta_{i,k}) }{S(\theta_{i,k})}   \right) \\
  &= \hat z_i -  \sum_{k=1}^{n_i}  \frac{q_i (\theta_{i,k})}{S(\theta_{i,k})}.
  \end{align*}
  \begin{align*}
  \sum_{j=1, j\neq i}^{I} \hat b_{ij} \hat z_{j}&=  \sum_{j=1, j\neq i}^{I}  \hat z_{j} n_i^{-j}\sum_{k=1}^{n_j} \frac{n_j  \hat z_j^{-1} q_i (\theta_{j,k} )  } {S(\theta_{j,k})}
  =\sum_{j=1, j\neq i}^{I}   \sum_{k=1}^{n_i}  \frac{ q_i (\theta_{j,k} )  } {S(\theta_{j,k})}.
  \end{align*}
  
  Combining these two parts, we obtain the final estimate
  \begin{align*}
  z_i &= \sum_{j=1, j\neq i}^{I}   \sum_{k=1}^{n_i}  \frac{ q_i (\theta_{j,k} )  } {S(\theta_{j,k})} +  \sum_{k=1}^{n_i}  \frac{q_i (\theta_{i,k})}{S(\theta_{i,k})}\\
  %&= \sum_{m=1}^N  \frac{q_i (\theta_{m})}{S(\theta_{m})}\\
  &=  \sum_{m=1}^N  \frac{q_i (\theta_{m})}{   \sum_{j=1}^I c_j  q_j (\theta_m)   },
  \end{align*}
  which is identical to the Rao-Blackwellized estimates.
  \end{proof}
  
 \subsection{Path sampling as the  limit of bridge sampling and annealed importance sampling}\label{sec_limit}
%Importance sampling and  bridge sampling work on the scale of the normalizing constant, while path sampling works in the scale of log   normalizing constant. However, 
\begin{proposition}
Path sampling can be viewed as the continuous limit of  bridge sampling \eqref{eq_bridge} and annealed importance sampling \eqref{eq_jarzynski} when the intermediate states $(0=\lambda_0 < \lambda_1 < \ldots < \lambda_{L+1}=1)$ is infinitely dense, such that $\max_l \delta_l \to 0$, where $\delta_l = \lambda_{l+1}- \lambda_{l}$ is the neighboring spacing. 
\end{proposition}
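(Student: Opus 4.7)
The plan is to Taylor-expand the log-increment appearing in the bridge and AIS estimators around each intermediate temperature, recognize the first-order term as the pointwise integrand of thermodynamic integration, and show that the sum of the higher-order remainders vanishes as the mesh $\max_l\delta_l\to 0$. Throughout I will assume the standard regularity conditions used already in Appendix~A.1 (interchange of differentiation and integration; bounded second moments of $\partial_\lambda\log q(\theta,\lambda)$ under $\pi_\lambda$; continuity of $\lambda\mapsto \E_{\pi_\lambda}[\partial_\lambda\log q]$).

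First I would handle the annealed importance sampling case. Writing
\begin{equation*}
\mathcal{W}(\theta_0,\dots,\theta_{L})=\sum_{l=0}^{L}\bigl[\log q(\theta_l,\lambda_{l+1})-\log q(\theta_l,\lambda_l)\bigr],
\end{equation*}
a Taylor expansion in $\delta_l$ gives $\log q(\theta_l,\lambda_{l+1})-\log q(\theta_l,\lambda_l)=\delta_l\,\partial_\lambda\log q(\theta_l,\lambda_l)+\tfrac12\delta_l^2\,\partial_\lambda^2\log q(\theta_l,\tilde\lambda_l)$ for some intermediate $\tilde\lambda_l$. Because the AIS transitions leave $\pi_{\lambda_l}$ invariant when $\theta_{l-1}\sim\pi_{\lambda_{l-1}}$, and since $\pi_{\lambda_{l-1}}$ is at total-variation distance $O(\delta_{l-1})$ from $\pi_{\lambda_l}$ (which itself follows from the bounded score moments), the marginal law of $\theta_l$ differs from $\pi_{\lambda_l}$ by $O(\max_k\delta_k)$. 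Taking expectations and summing,
\begin{equation*}
\E\mathcal{W}=\sum_{l=0}^{L}\delta_l\,\E_{\pi_{\lambda_l}}\!\Bigl[\partial_\lambda\log q(\theta,\lambda_l)\Bigr]+O\!\left(\sum_l\delta_l^2\right)+O(\max_l\delta_l),
\end{equation*}
and since $\sum_l\delta_l=1$, the $\sum_l\delta_l^2\le(\max_l\delta_l)\sum_l\delta_l$ term vanishes in the limit. The surviving leading term is the Riemann sum for the thermodynamic integral \eqref{eq_gradient}, so $\E\mathcal{W}\to\int_0^1\E_{\pi_\lambda}[\partial_\lambda\log q]\,d\lambda=\log(z_{\lambda_L}/z_{\lambda_0})$. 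To conclude with the actual AIS identity $\E\exp\mathcal{W}=z_{\lambda_L}/z_{\lambda_0}$, I would show $\Var(\mathcal{W})=O(\max_l\delta_l)\to 0$ by the same Taylor expansion together with the invariance of the kernels; then Jensen's gap closes and $\log\E\exp\mathcal{W}\to\E\mathcal{W}$.

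Next I would do the bridge sampling case. Using $q(\theta,\lambda_{l+1})/q(\theta,\lambda_l)=1+\delta_l\,\partial_\lambda\log q(\theta,\lambda_l)+O(\delta_l^2)$, and similarly for the denominator of \eqref{eq_bridge}, a direct expansion of the ratio in \eqref{eq_bridge} yields
\begin{equation*}
\log\frac{z_{\lambda_{l+1}}}{z_{\lambda_l}}=\delta_l\,\E_{\pi_{\lambda_l}}\!\Bigl[\partial_\lambda\log q(\theta,\lambda_l)\Bigr]+O(\delta_l^2),
\end{equation*}
where the $O(\delta_l^2)$ constant depends on the variance of $\partial_\lambda\log q$ under $\pi_{\lambda_l}$ and on the chosen $\alpha_l$, but is uniformly bounded by the regularity hypothesis. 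The key observation is that the first-order term does not depend on $\alpha_l$: the bridge weights drop out in the linearization because the numerator and denominator in \eqref{eq_bridge} are equal to leading order. Telescoping over $l=0,\dots,L$ and applying $\sum_l\delta_l^2\le\max_l\delta_l$ yields
\begin{equation*}
\log\frac{z_{\lambda_{L+1}}}{z_{\lambda_0}}=\sum_{l=0}^{L}\delta_l\,\E_{\pi_{\lambda_l}}\!\Bigl[\partial_\lambda\log q(\theta,\lambda_l)\Bigr]+O(\max_l\delta_l),
\end{equation*}
and the Riemann sum converges to the path sampling integral by continuity of the integrand.

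The main obstacle is not the formal Taylor expansion but the uniform control of remainders: showing that $\E_{\pi_{\lambda_l}}[\partial_\lambda^2\log q]$ and the relevant variances are bounded uniformly in $l$, so that the aggregated $O(\delta_l^2)$ error over $L\asymp 1/\max_l\delta_l$ intervals truly vanishes, and (for AIS) that the drift of the law of $\theta_l$ away from $\pi_{\lambda_l}$ along the non-equilibrium chain does not accumulate. Both require the same sort of moment-boundedness and Lipschitz-in-$\lambda$ hypotheses as in Appendix~A.1, and once these are in place both limits reduce to Riemann-sum convergence.
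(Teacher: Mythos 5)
Your bridge-sampling argument is essentially the paper's own proof in different clothing: the paper defines $G_l(\xi)=\log\E_{\pi_{\lambda_l}}\bigl[q(\theta,\lambda_l+\xi)/q(\theta,\lambda_l)\bigr]$, notes $G_l(0)=0$ and that $G_l'(0)$ equals the thermodynamic gradient in \eqref{eq_gradient}, and telescopes the expansions $G_l(\delta_l)=\delta_l G_l'(0)+o(\delta_l)$ into a Riemann sum, using $\sum_l\delta_l=1$ to make the aggregated remainder vanish --- which is your linearization of \eqref{eq_bridge}, made cleaner by the observation that the bridge identity is exact for every choice of $\alpha_j$ and hence reduces to the one-step importance-sampling identity at the population level, so no separate control of the $\alpha$-dependence is needed.

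The gap is in your AIS treatment. You assert (i) that the marginal law of $\theta_l$ is within $O(\max_k\delta_k)$ of $\pi_{\lambda_l}$ in total variation, and (ii) that $\Var(\mathcal{W})=O(\max_l\delta_l)$, and neither follows from the moment-boundedness and Lipschitz-in-$\lambda$ hypotheses you invoke. Invariance of the transition kernels only gives the recursion $d_{\mathrm{TV}}\bigl(\mathrm{law}(\theta_l),\pi_{\lambda_l}\bigr)\le d_{\mathrm{TV}}\bigl(\mathrm{law}(\theta_{l-1}),\pi_{\lambda_{l-1}}\bigr)+d_{\mathrm{TV}}\bigl(\pi_{\lambda_{l-1}},\pi_{\lambda_l}\bigr)$, so without a contraction or ergodicity assumption the non-equilibrium drift accumulates over $L\asymp 1/\max_l\delta_l$ steps to $O(\sum_l\delta_l)=O(1)$, not $O(\max_l\delta_l)$; likewise the cross terms $\sum_{l\neq m}\delta_l\delta_m\,\mathrm{Cov}(U_l,U_m)$ in $\Var(\mathcal{W})$ only vanish if correlations along the trajectory decay, which is again a mixing assumption, and closing the Jensen gap requires control of all higher cumulants of $\mathcal{W}$, not just its variance. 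The paper sidesteps all of this by treating AIS, like bridge sampling, through the exact population-level telescoping identity $z_{l+1}/z_l=\E_{\pi_{\lambda_l}}\bigl[q(\theta,\lambda_{l+1})/q(\theta,\lambda_l)\bigr]$ and expanding its logarithm, never working with the random work $\mathcal{W}$ itself. If you want to keep your sample-level, genuinely non-equilibrium argument, you must add explicit contraction/mixing hypotheses on the kernels (or let the number of MCMC steps per temperature grow with $L$); otherwise the stated rates, and hence the limit, are not justified.
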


\begin{proof}
The proof is similar to the reasoning in \citet{Gelman1998Simulating}.
Notably, bridge sampling and annealed importance sampling  work in the scale of the normalizing constant $z$, essentially computing $z(1)/z(0)$  by    $ \prod_{l=0}^L \left( z(l+1)/z(l)\right)$, or equivalently, 
$\log z(1)/z(0) = \sum_{l=0}^L \log z(l)- \log z(l-1)$.  Further, both bridge sampling and annealed importance sampling are based on importance sampling identity (with potential refinement of more intermediate states in  bridge sampling):
$$ \frac{z(l)}{z(l-1)} = \int_{\Theta} \frac{q(\theta|\lambda_{l+1})}{q(\theta|\lambda_l)}p(\theta|\lambda_l)d\theta.$$
In general, $\log  \E\left({q(\theta|\lambda_{l+1})}/{q(\theta|\lambda_l)}\right) \neq   \E\left( \log {q(\theta|\lambda_{l+1})} - \log {q(\theta|\lambda_l)}\right)$, where the expectation is taken over  $\theta \sim p(\theta|\lambda_l)$. However, such difference will be be approaching  zero when we have fine ladder. 

For a fixed $\lambda_{l}$, let $$G_l(\xi)= \log \int_{\Theta} \frac{q(\theta|\lambda_{l}+ \xi) }{q(\theta|\lambda_l)}p(\theta|\lambda_l)d\theta.$$  It satisfies $G_l(0)= 0$, and its derivative is 
  \begin{align*}
G_l'(\xi) &= \frac{d}{d\xi} \left(\log \int_{\Theta} \frac{q(\theta|\lambda_{l}+ \xi) }{q(\theta|\lambda_l)}p(\theta|\lambda_l)d\theta \right)\\
&= \frac{z(l)}{z(l+\xi)}  \left(\int_{\Theta} \frac{\partial }{\partial \xi} \frac{q(\theta|\lambda_{l}+ \xi) }{q(\theta|\lambda_l)}p(\theta|\lambda_l)d\theta \right).
\end{align*}
At $\xi=0$, $G_l'(0)$  becomes identical to the path sampling gradient in  \eqref{eq_gradient}, as
$$G_l'(0)=  \int_{\Theta} \frac{\partial}{\partial \lambda}  \log q(\theta|\lambda_{l}) p(\theta|\lambda_l)d\theta.$$ 

By Taylor series expansion, 
$G_l(\xi)= G_l(0) +  \xi \int_{\Theta} \frac{\partial}{\partial \lambda} \log q(\theta|\lambda_{l}) p(\theta|\lambda_l)d\theta + o(\xi),$ Hence, in the limit as $\max_l \delta_l \to 0$, the importance sampling based estimate can be rearranged into  
\begin{align*}
\log z(1)/z(0) &= \sum_{l=0}^L G_l(\delta_l) \\
&=\sum_{l=0}^L   \left( \delta_l \int_{\Theta} \frac{\partial}{\partial \lambda} \log q(\theta|\lambda_{l}) p(\theta|\lambda_l)d\theta + o(\delta_l) \right)\\
&= \int_{0}^1 \int_{\Theta} \frac{\partial}{\partial \lambda} \log q(\theta|\lambda_{l}) p(\theta|\lambda_l)d\theta d \lambda + o(1),
\end{align*}
where the dominant term equals the path sampling estimate, and the remainder approaches 0 in the dense limit $\delta_l \to 0, \forall l$ since $\sum_l {\delta_l}=1$. 
\end{proof}

 \subsection{On the choice of prior  \texorpdfstring{$p(a)$}{}}\label{sec_prior}
In path sampling, the final marginal distribution of $a$ relies on user specification. By default we use $z(\cdot) \to c(\cdot)$, which enforce a uniform   marginal distribution.  More generally, by updating
$c(\cdot) \leftarrow z(\cdot)/p^{\mathrm{prior}}(\cdot)$, the final marginal distribution of $a$ will approach $p^{\mathrm{prior}}$. In this section we discuss the choice of  $p^{\mathrm{prior}}$ beyond uniformity.
 
The choice of $p^{\mathrm{prior}}$   is subject to a efficiency-robustness trade-off.  There are three separate goals to pursue via prior tuning:

\begin{enumerate}
    \item \textbf{Robustness.} Because a uniform $a$ ensures the Markov chain has explored the whole temperature space,  it is a conservative choice and we use for default in adaptations. $$
 p(a) \propto 1.
$$
\item \textbf{Minimal variance of log $z$.}
On then other end of the spectrum, we can ask for efficiency.   \cite{Gelman1998Simulating} proved that the generalized Jeffreys prior  minimizes the variance of estimated log normalizing constant.
$$
 p^{\mathrm{opt}}(\lambda) \propto \sqrt{\E_{\theta|\lambda}U^2(\theta,\lambda)}.
$$
where $U(\theta,\lambda)= \frac{\partial}{\partial \lambda} \log q(\theta, \lambda)$. 
%Indeed, the variance of the path sampling estimate of \eqref{path1} is approximately 
%$$\Var\left( \delta  \E_{\theta} \frac{\partial}{\partial \lambda} \log q(\theta, \lambda)   \right)  \approx \Var\left( 1/p(\lambda_i)  \E_{\theta} \frac{\partial}{\partial \lambda} \log q(\theta, \lambda)   \right)= \E\left( \delta  \E_{\theta} \frac{\partial}{\partial \lambda} \log q(\theta, \lambda)   \right)  \approx \Var\left( 1/p(\lambda_i)  \E_{\theta} \frac{\partial}{\partial \lambda} \log q(\theta, \lambda)   \right)   $$
%With such prior, both the 

\item \textbf{Smooth transition in the joint sampling.}
From the perspective of successful sampling, we could ask for   
 $$\mathrm{KL}\Bigl( \pi_a, \pi_{a+\delta a}  \Bigr)\approx \mathrm{constant},$$
which is related to the constant acceptance rate in discrete Gibbs update (when the discrete temperature update is restricted to either a one-step jump $\lambda_k\to \lambda_{k\pm 1}$ or remain unchanged, This constant acceptance rate  is often used as a tuning target in discrete tempering \citep{geyer1995annealing}. The next proposition gives an closed form optimal prior that ensures this constant KL gap.
\end{enumerate}

\begin{proposition}
 The desired prior to achieve a smooth KL gap is
$$
p^{\mathrm{opt}}(a)   \propto \frac{1}{{f'(a)}} \sqrt{ \mathrm{Var}_{\theta\sim p(\theta|a)}  U(\theta, a) }, ~\forall a_{\min}< a < a_{\max}.
$$
This is similar to the minimal-variance prior above, with a slight twist from the second moment to variance. 
\end{proposition}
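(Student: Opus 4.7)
The plan is to Taylor-expand $\mathrm{KL}(\pi_a, \pi_{a+\delta a})$ to second order in $\delta a$, identify the leading coefficient as a Fisher-information quantity, and then use the standard relation between a one-dimensional sampling density and the typical spacing of its draws to solve for $p^{\mathrm{opt}}(a)$. The zeroth-order term of the expansion vanishes automatically, and the first-order term vanishes by the thermodynamic-integration identity~\eqref{eq_gradient}, so everything is determined by the quadratic piece.

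Concretely, I would write $\log \pi_a(\theta) = \log q(\theta, f(a)) - \log z(f(a))$, expand $\log \pi_{a+\delta a}$ to second order about $a$, and integrate against $\pi_a$. The first-order contribution $\delta a \bigl(\mathbb{E}_{\pi_a}[\partial_a \log q(\theta, f(a))] - \partial_a \log z(f(a))\bigr)$ is identically zero by \eqref{eq_gradient}. For the quadratic term, I would differentiate the thermodynamic identity once more, using the log-derivative trick as in Appendix~\ref{sec_appendix_Thermodynamic}, to obtain
\[
\partial_a^2 \log z(f(a)) = \mathbb{E}_{\pi_a}\!\bigl[\partial_a^2 \log q(\theta,f(a))\bigr] + \mathrm{Var}_{\pi_a}\!\bigl(\partial_a \log q(\theta,f(a))\bigr).
\]
The $\mathbb{E}[\partial_a^2 \log q]$ pieces then cancel against their counterparts coming from the Taylor expansion of $\log q$, leaving the clean Fisher-information form
\[
\mathrm{KL}(\pi_a, \pi_{a+\delta a}) = \tfrac{1}{2}(\delta a)^2 \,\mathrm{Var}_{\pi_a}\!\bigl(\partial_a \log q(\theta, f(a))\bigr) + o\bigl((\delta a)^2\bigr).
\]

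The last step is the chain rule $\partial_a \log q(\theta, f(a)) = f'(a)\, U(\theta, a)$ together with the spacing argument: consecutive draws from a one-dimensional density $p^{\mathrm{opt}}(a)$ have typical gap $\delta a \sim 1/p^{\mathrm{opt}}(a)$, so demanding that the quadratic KL coefficient times $(\delta a)^2$ be constant across $(a_{\min}, a_{\max})$ pins down the proportionality claimed in the statement. I expect the main obstacle to be bookkeeping rather than anything deep: justifying the interchange of differentiation and integration in the second application of the thermodynamic identity (the dominated-convergence condition of Appendix~\ref{sec_appendix_Thermodynamic} is enough), and tracking the chain-rule factors of $f'(a)$ carefully so that the final proportionality matches the precise form stated. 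The restriction to $a \in (a_{\min}, a_{\max})$ is necessary because $f'(a)=0$ on the flat regions makes the Fisher-information metric degenerate there, and in any case no rebalancing is needed on those regions since all of their $a$ values map to a single $\lambda$.
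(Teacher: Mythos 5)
Your overall route is the same as the paper's: expand $\mathrm{KL}(\pi_a,\pi_{a+\delta a})$ to second order in $\delta a$, kill the first-order term with the thermodynamic identity \eqref{eq_gradient}, identify the quadratic coefficient as a variance (Fisher-information) quantity, and turn the spacing relation $\delta a \propto 1/p(a)$ into a condition on the marginal. Your use of the general identity $\partial_a^2\log z(f(a)) = \E_{\pi_a}\bigl[\partial_a^2\log q\bigr]+\mathrm{Var}_{\pi_a}\bigl(\partial_a\log q\bigr)$ is, if anything, slightly cleaner than the paper's computation, which exploits the geometric-path structure to write $\E_{\pi_a}[\partial_a^2\log q]=\frac{f''(a)}{f'(a)}\frac{d}{da}\log z(f(a))$; both give the same quadratic coefficient.

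The gap is precisely the step you defer as ``bookkeeping.'' Carrying out your own chain rule, the coefficient is $\mathrm{Var}_{\pi_a}\bigl(\partial_a\log q(\theta,f(a))\bigr)=f'(a)^2\,\mathrm{Var}_{\pi_a}\bigl(U(\theta,f(a))\bigr)$ with $U=\partial_\lambda\log q$, so imposing constancy of $(\delta a)^2$ times this coefficient with $\delta a\propto 1/p(a)$ yields $p^{\mathrm{opt}}(a)\propto f'(a)\sqrt{\mathrm{Var}_{\theta\sim p(\theta|a)}U}$, i.e.\ a factor $f'(a)$ rather than the $1/f'(a)$ in the statement. Your assertion that the factor-tracking ``pins down the proportionality claimed'' is therefore not established, and under the natural reading it is false: no arrangement of the $f'$ factors in your expansion produces $1/f'(a)$ (taking $U$ as an $a$-derivative removes the $f'$ dependence entirely; measuring spacing in $\lambda$ instead of $a$ returns the same $f'(a)$ factor). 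Notably, the paper's own proof carries the same tension: its intermediate display equals $f'(a)^2\,\mathrm{Var}_{\theta\sim p(\theta|a)}\bigl(\log\psi-\log q\bigr)$, whose square root is $f'(a)\bigl(\mathrm{Var}\bigr)^{1/2}$, yet its last line writes $\frac{1}{f'(a)}\bigl(\mathrm{Var}\bigr)^{1/2}$. So you should do the factor accounting explicitly and either conclude that the derivation supports $p^{\mathrm{opt}}(a)\propto f'(a)\sqrt{\mathrm{Var}_{\theta|a}U}$ (flagging the stated $1/f'(a)$ as an apparent slip) or exhibit a convention under which the stated form is recovered; deferring it hides the only nontrivial point of the proof.
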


\begin{proof} 
It is easy to verify that 
%\[\begin{split}
% \mathrm{KL}\Bigl( \pi_a, \pi_{a+\delta a}  \Bigr) &= \int \log \bigl( \frac{\pi_{a}}{\pi_{a+\delta a}}   )d\pi_{a}  \\
% &=\log \bigl(\frac{z(a+\delta a)}{ z(a)}\bigr)- \int{  ( f(a+\delta a) -f(a)) \log  \frac{\psi}{\phi}  d\pi_{a} }\\
% &=\log \bigl(\frac{z(a+\delta a)}{ z(a)}\bigr)- \int{ (\delta a f'(a) +  \frac{1}{2} ( \delta a )^2 f''(a) +o(\delta a)^2 )  \log  \frac{\psi}{\phi}  d\pi_{a} }\\
 % &=\log \bigl(\frac{z(a+\delta a)}{ z(a)}\bigr)- ( \delta a  + \frac{f''(a)}{2f'(a)} ( \delta a)^2  )  \frac{d}{da}\log(z(a)) + o(\delta a)^2 \\
 %  &= \frac{1}{2}(\delta a)^2\frac{d^2}{d a^2} \log(z(a)) - ( \frac{f''(a)}{2f'(a)} ( \delta a)^2  )  \frac{d}{da}\log(z(a)) + o(\delta a)^2 \\
 %     &= \frac{1}{2}(\delta a)^2  \Bigl( \frac{d^2}{d a^2} \log(z(a)) - \frac{f''(a)}{f'(a)}   \frac{d}{da}\log(z(a)) \Bigr) + o(\delta a)^2 \\
%\end{split}\] 
$$\mathrm{KL}\Bigl( \pi_a, \pi_{a+\delta a}  \Bigr) = \int \log \bigl( \frac{\pi_{a}}{\pi_{a+\delta a}}   )d\pi_{a} = \frac{1}{2}(\delta a)^2  \Bigl( \frac{d^2}{d a^2} \log z(a) - \frac{f''(a)}{f'(a)}   \frac{d}{da}\log(z(a)) \Bigr) + o(\delta a)^2.$$
Assuming we have already sampled from the joint stationary distribution,  the gap between two neighboring order statistics reflects how dense the  local density is, i.e., 
$\delta a \propto 1/ p(a).$

Further, the two derivative terms can be expressed by expectations,
$$ \frac{d}{d a} \log(z(a))= f'(a) \mathrm{E}_{\theta\sim p(\theta|a)} \bigl[  \log(\psi)-\log(\phi )   \bigr].  $$
\[\begin{split}
\frac{d^2}{d a^2} \log(z(a))=& f''(a) \mathrm{E}_{\theta\sim p(\theta|a)} \bigl[  \log(\psi)-\log(q )   \bigr]+ f'(a)^2 \mathrm{E}_{\theta\sim p(\theta|a)} \left(  \bigl( \log(\psi)-\log(q )   \bigr)^2  \right)  \\
 &-\Bigl(   f'(a) \mathrm{E}_{\theta\sim p(\theta|a)}\left( \log(\psi)-\log(q )   \right)  \Bigr)^2,
 \end{split}\]  
 which further simplifies to 
\[\begin{split}
&\frac{d^2}{d a^2} \log(z(a)) - \frac{f''(a)}{f'(a)}   \frac{d}{da}\log(z(a)) \\
= & f'(a)^2 \mathrm{E}_{\theta\sim p(\theta|a)} \left( \bigl( \log(\psi)-\log(q )   \bigr)^2  \right) -\Bigl(   f'(a) \mathrm{E}_{\theta\sim p(\theta|a)}\left( \log(\psi)-\log(q )   \right)  \Bigr)^2. 
 \end{split}\]  
Put all together, the constant KL gap will be achieved by
\[\begin{split}
p(a)   \propto& (\frac{d^2}{d a^2} \log z(a)  - \frac{f''(a)}{f'(a)}   \frac{d}{da}\log(z(a)) \Bigr  )^{\frac{1}{2}} \\
%\propto & \frac{1}{{f'(a)}} \Bigl( \mathrm{E}_{\theta\sim p(\theta|a)} \left(  \bigl( \log(\psi)-\log(q )   \bigr)^2  \right) -\Bigl(   \mathrm{E}_{\theta\sim p(\theta|a)}\left( \log(\psi)-\log(q )   \right)  \Bigr)^2   \Bigr)^{-\frac{1}{2}}\\
=& \frac{1}{{f'(a)}}   \Bigl( \mathrm{Var}_{\theta\sim p(\theta|a)} \bigl(\log(\psi)-\log(q )\bigr)  \Bigr)^{\frac{1}{2}}.
  \end{split}\] 
 It is also evident that under this prior, both $\mathrm{KL}\Bigl( \pi_a, \pi_{a+\delta a}  \Bigr)$ and the reserve jump  $\mathrm{KL}\Bigl(\pi_{a+\delta a}, \pi_a \Bigr)$ approximates (different) constants along the trajectory. 
 \end{proof}  
 
Due to dependence on the unknown normalizing constant (and higher orders),  these two efficiency-optimal priors require additional tuning and adaptations.  In general,  we still prefer to use the simple uniform margin for robustness. Nevertheless, our method enables any user specific-prior choice, and in Section \ref{sec_exp_idc}, we illustrate that  this adaptively estimated and assigned efficiency-optimal prior further reduces the variance in implicit divide and conquer scheme.

\subsection{Choice of regression kernels}
In regularized path sampling (Section \ref{sec_method}), we regularize  $\log z$ and $\log c$ via a  parametric  regression form: 
\begin{equation}\label{eq_kernel}
   \min_{\beta}\sum_{i=1}^{I}\left(\log z(f(a^{*}_i)) - \left(\beta_0 f(a^{*}_i) + \sum_{j=1}^J  \beta_j  \gamma_j(f(a^{*}_i) )\right)\right)^2.
 \end{equation}
In other words, we approximate $\log z(f(a))$ by $\beta_0 f(a) + \sum_{j=1}^J  \beta_j  \gamma_j(f(a)).$ This also enables a cloded form expression for the gradient $c'$ and $z'$ that will be used in estimate \eqref{eq_gradient_marginal}.

The term $\log z(f(a))$ will be a constant function wherever $f(a)$ is constant.  
In our experiment, we have tried a sequence of Gaussian kernels, 
$$\gamma_j(\lambda) = \exp ( \frac{ -(\lambda- \lambda^{{\mathrm{kernel}}}_{j} )^2 }{2\sigma^2_{\mathrm{kernel}}} ),  ~\lambda^{\mathrm{kernel}}_j = \frac{j}{J+1}, ~ \sigma_{\mathrm{kernel}}= 1/J, ~  1\leq j\leq J, $$
logit kernels,  
$$\gamma_j(\lambda) =  \frac{1}{1+ \exp (-\frac{\lambda- \lambda^{{\mathrm{kernel}}}_{j}  }{\sigma_{\mathrm{kernel}}} ) },  ~\lambda^{\mathrm{kernel}}_j = \frac{j}{J+1}, ~ \sigma_{\mathrm{kernel}}= 1/J, ~  1\leq j\leq J, $$
and cubic splines.

We have not found the kernel choice to have a large impact on the final tempering procedure or log normalizing constant. For the experiments in Section \ref{sec_exp}, we used a combination of the Gaussian and logit kernels at $J=10$ points within the path sampling adaptation steps for speed and simplicity. After running our final adaptation, we then smoothed the final estimate of the normalizing constant or marginal posterior using cubic splines since these introduce less pseudo-periodic behavior than the Gaussian kernels.

To be clear, although $z(\cdot)$ can be further used in density estimation,  the kernels are used here for regularization and functional approximation, and is \emph{not} relevant to \emph{kernel} density estimation. The latter is noisy and contingent on various smoothness assumptions.
 In contrast, \eqref{eq_kernel} is a \emph{linear regression} problem as $\log z(f(a^{*}_i))$ is known from path sampling estimate. Besides, the choice of inner kernel points should  not be confused with the tempering ladder. Here we are sampling a and $\lambda$ continuously and evaluate $z(\lambda)$ for all $\lambda\in [0,1]$.

 \section*{Appendix B.  Software implementation in \texttt{Stan}}
 To provide an \texttt{R}  \citep{R} interface of   path sampling and continuous tempering, 
we create a  package \texttt{pathtemp}, with the  underlying execution inside  the general-purpose Bayesian inference engine \texttt{Stan} \citep{stan_jss,  stan}. The source code is available at  \url{https://github.com/yao-yl/path-tempering}.
The procedure is highly automated and requires minimal tuning.  
  
To install the package, call
\begin{verbatim} 
devtools::install_github("yao-yl/path-tempering/package/pathtemp",upgrade="never")
\end{verbatim} 
  
 We demonstrate the practical implementation of continuous tempering on a  Cauchy mixture example. Consider the following Stan model:
\begin{verbatim} 
data {
  real y;
}
parameters {
  real theta;
}
model{
  y ~ cauchy(theta, 0.2);   
 -y ~ cauchy(theta, 0.2);   
}
\end{verbatim}

\citet{yao2020stacking} have analyzed the posterior behaviour of this mixture model. With a moderately large input data $y$, the posterior distribution of $\theta$ will asymptotically concentrated at two points close to $\pm y$. As a result,  Stan cannot fully sample from this two-point-spike even with a large number of iterations. 

To run continuous tempering, a user can  specify any base model,  say $\theta\sim$ normal$(0,5)$,  and list it in an  \texttt{alternative model} block as if it is a regular model.   
 
\begin{verbatim} 
...
model{     // keep the original model  
  y ~ cauchy(theta,0.2);   
 -y ~ cauchy(theta,0.2);   
}
alternative model{    // add a new block of the base measure (e.g., the prior). 
 theta ~ normal(0,5);   
}
\end{verbatim}

After saving this code to a stan file \texttt{cauchy.stan}, we run the function  \texttt{code\_temperature\_augment()}, which automatically constructs a tempered path between the orginal model and the alternative model, and generates a working model named \texttt{cauchy\_augmented.stan}:
\begin{verbatim} 
library(pathtemp)
update_model <- stan_model("solve_tempering.stan")
file_new <- code_temperature_augment("cauchy.stan")
> output:
> A new stan file has been created: cauchy_augmented.stan.
\end{verbatim}

We have automated path sampling and its adaptation into a function \texttt{path\_sample()}. The following two lines realize adaptive path sampling.

\begin{verbatim} 
sampling_model <- stan_model(file_new) #  translated to C++ code
path_sample_fit <- path_sample(data=list(gap=10), # data list in original model
                            sampling_model=sampling_model)
\end{verbatim}

The returned value \texttt{path\_sample\_fit} provides access to the posterior draws $\theta$ from the target density and base density, the join path in the $(\theta, a)$ space in the final adaptation, and the estimated log normalizing constant $\log z(\lambda)$.

\begin{verbatim} 
sim_cauchy <- extract(path_sample_fit$fit_main)
in_target <- sim_cauchy$lambda==1
in_prior  <- sim_cauchy$lambda==0
# sample from the target 
hist(sim_cauchy$theta[in_target])
# sample from the base 
hist(sim_cauchy$theta[in_prior])
# the joint "path"
plot(sim_cauchy$a, sim_cauchy$theta)
# the normalizing constant
plot(g_lambda(path_sample_fit$path_post_a), path_sample_fit$path_post_z)
\end{verbatim}
The output is presented in Figure \ref{fig:cauchy_output}.
\begin{figure}[ht]
    \centering
    \includegraphics[width=\linewidth]{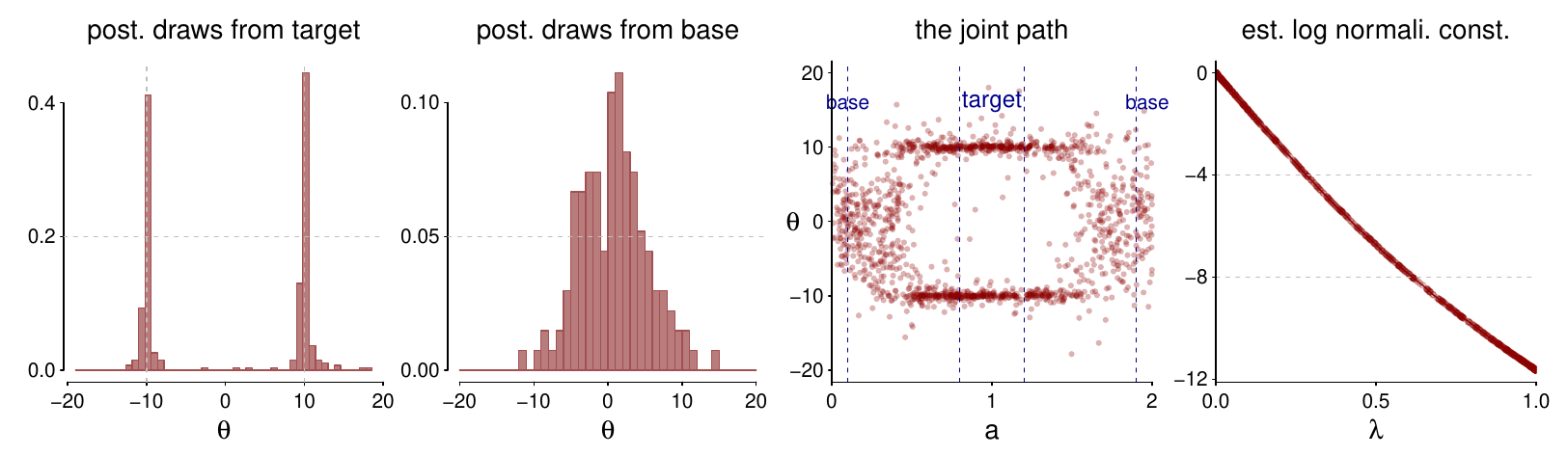}
    \caption{\em Output from the Cauchy code example: Posterior draws $\theta$ from the target density and base density, the join path in the $(\theta, a)$ space in the final adaptation, and the estimated log normalizing constant $\log z(\lambda)$. The visualization is based on 6 adaptations and S=1500 posterior draws.}
    \label{fig:cauchy_output}
\end{figure}

Second, this automated procedure enables to fit two models together. 

The following Stan code fits a regression with both the probit and logit link. A path between them effectively expands the model continuously such both individual model are special cases of the augmented model. The computational efficiency is enhanced as we are fitting one slightly larger model rather than fitting two models. In addition,  the log normalizing constant tells us which models fits the data better, which is related to but distinct from the log Bayes factor in model comparisons. 
The output in one run is presented in Figure \ref{fig:logit_output}. The data favor the logit link accordingly.
\begin{figure}[ht]
    \centering
    \includegraphics[width=\linewidth]{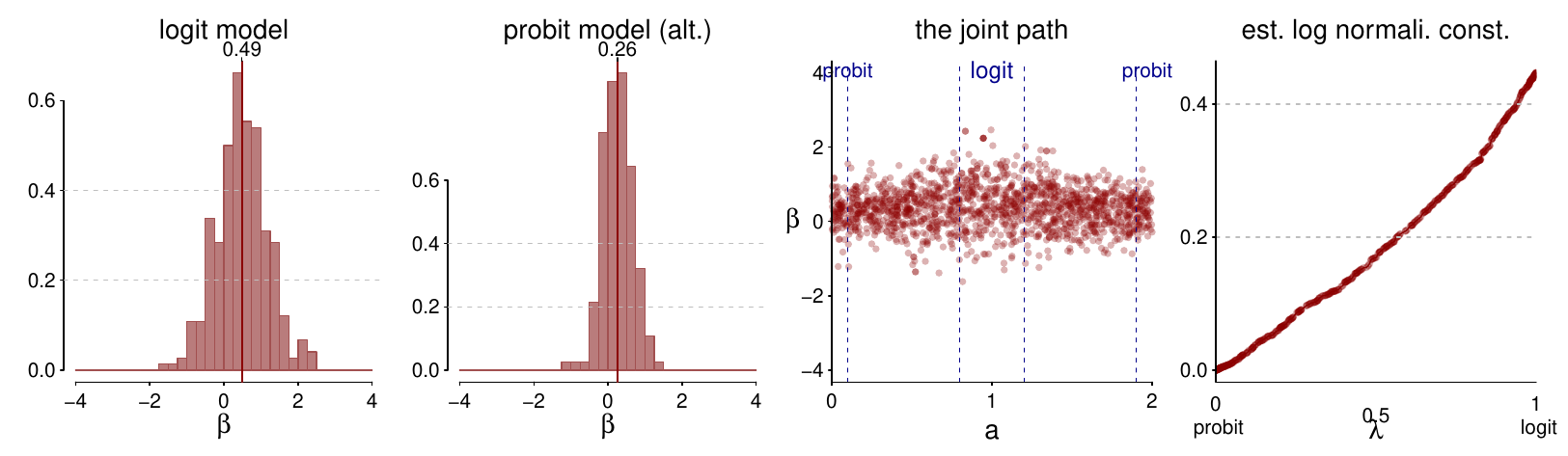}
    \caption{\em Output from the logit-probit code example: the posterior draws $\beta$ from the probit and logit link, the join path in the $(\beta, a)$ space in the final adaptation, and the estimated log normalizing constant $\log z(\lambda)$. The visualization is based on 2 adaptations and S=1500 posterior draws.}
    \label{fig:logit_output}
\end{figure}

\begin{verbatim} 
data {
	int n;
	int y[n];
	real x[n];
}
parameters {
	real beta;
}
model {
    beta ~ normal (0,2);
    y ~ bernoulli_logit(beta * x);  // logistic regression
}
alternative model {
    beta ~ normal (0,1);  // can be a different prior
    y ~  bernoulli(Phi(beta * x)); // probit regression
}
\end{verbatim}

\section*{Appendix C.  A failure mode: simulated tempering on latent Dirichlet allocation}
Here we present a failure mode of simulated tempering on a high dimensional latent Dirichlet allocation (LDA) model, which is widely used in natural language processing, computer vision, and population genetics. In the model, the $j$-th document ($1\leq j \leq J$) is drawn from the $l$-th topic ($1\leq l \leq L$) with probability  $\theta_{jl}$, where the topic is defined by a vector of probabilities $\phi_l$ over the vocabulary, such that each word in the document from topic $l$ is independently drawn from a multinomial distribution with probability $\phi_l$.  	We apply the LDA model to texts in the novel {\em Pride and Prejudice}. After removing frequent and rare words, the book contains 2025 paragraphs and 32877 words, with a total unique vocabulary size of 1495. We randomly split the words in the data into a training and  test set. The dimension of the parameters  $\theta$ and $\phi$ grows as a function of the number of topics $L$ by $2025\times L$ and $L\times 1495$ respectively.  We place independent Dirichlet$(0.1)$ priors on $\theta$ and $\phi$. In our experiment, we use $L=5$ topics.

LDA is prone to a multimodal posterior distribution.  The variational based inference often does not replicate itself from multiple runs or data shuffle, which can create the appearance of random results for the user and reduces the predictive power. \citet{yao2020stacking} has reported posterior multimodality in this dataset and model setting. 
	
We use continuous tempering to sample from the joint distribution proportional to the joint density: $c^{-1}(\lambda)$prior$^{1-\lambda}$likelihood$^{\lambda}$. For initialization, we start by simulating draws $(\theta)_{i}^{\mathrm{prior}}$ in the prior, where $\theta$ now denotes all parameters in the model, 
and assign the importance sampling estimate   $\log\left(1/S\sum_{i=1}^Sp(y|( \theta)_{i}^{\mathrm{prior}})\right)$ to the initial slope coefficient $b_0$, which is close to $-7\times10^{4}$ in our first run.

After 10 adaptations and 3000 joint HMC iterations per adaptation, the sampler still explores a thin range of the transformed temperature $a$, both in the last sample, and all 10 samples mixed, shown in the two histograms in Figure \ref{fig:lda1}. This marginal pattern fails the Pareto-$\hat k$ diagnostic, suggesting the joint tempering path is unreliable.  

\begin{figure}[ht]
    \centering
    \includegraphics[width=\linewidth]{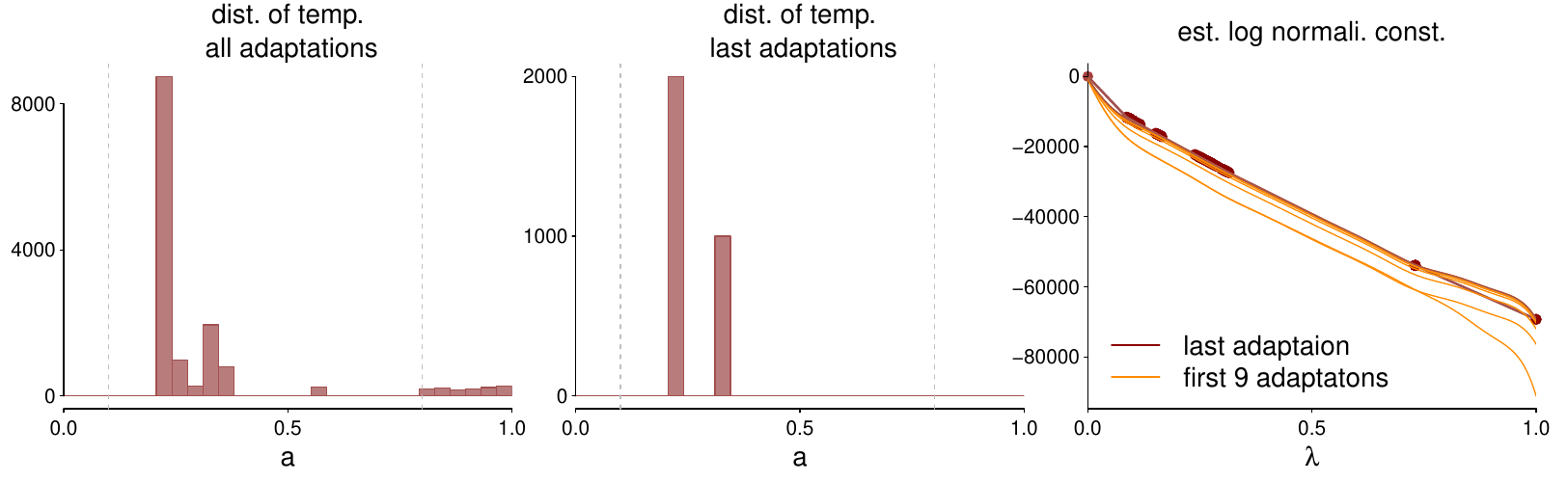}
    \caption{\em 1-2: Histograms of the transformed temperature among all adaptations and the final adaptation. 3: The estimated log normalizing constant as a function of temperature $\lambda$.}
    \label{fig:lda1}
\end{figure}

In adaptive path sampling, the log normalizing constant is  computed by the integral of the pointwise gradient $u_i= \log p(y|a_i, \theta_i) \frac{d}{da}\lambda\bigr|_{a=a_i}.$ 
We examined these three terms:  log likelihood $\log p(y|a_i, \theta_i)$, pointwise gradient $u_i$, and the $\lambda$ derivative   $\frac{d}{da}\lambda\bigr|_{a=a_i}$ along all sampled $a_i$ in Figure \ref{fig:lda2}. The variation in log likelihood $\log p(y|a_i, \theta_i)$ could be a concern as we approximate its pointwise expectation by  one Monte Carlo draw.  However, the variation in pointwise log likelihood (scale of $10^3$) is small compared with its absolute scale ($\sim 10^5$). Hence, the gradient seems to have been computed with  small Monte Carlo error (panel 2).  
Expect if we zoom in, the potinwise variation  can still be found around  $a\approx 0.6$ (panel 4). 

Why does this matter? Recall what we typically require for \emph{curve fitting} in data analysis. We often do not care about the absolute scale of the curve. Indeed we often transform all input to the unit scale in regressions, implicitly assuming that approximating a process N$(1:10, 1)$ with pointwise errors N$(0,0.1^2)$ is operationally equivalent to approximating the multiplicatively inflated process 
N$(100:1000, 100^2)$ with pointwise errors N$(0,10^2)$.

\begin{figure}[ht]
    \centering
    \includegraphics[width=\linewidth]{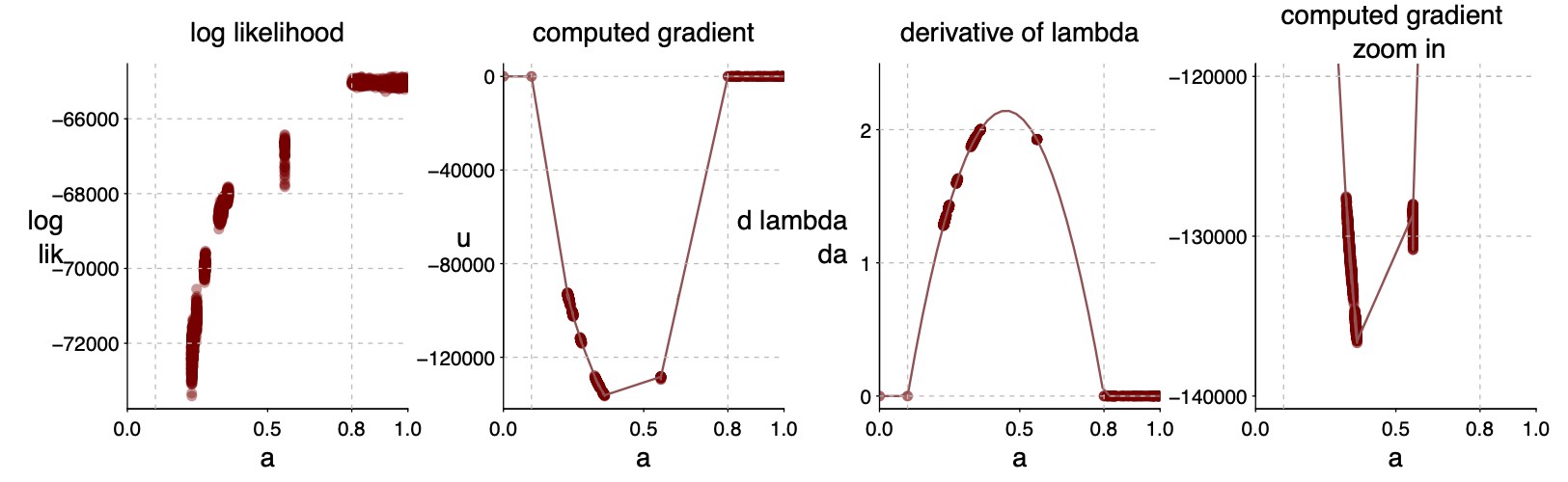}
    \caption{\em (1) the pointwise log likelihood $p(y|theta, a)$ among all simulations draws as a function of $a$.  (2) the computed pointwise gradient u. (3) the derivative ${d}\lambda$/{da}. (4) zoom in panel 2 and only present the lower end.}
    \label{fig:lda2}
\end{figure}

Measured from its relative error, the path sampling estimate of log normalizing constant has been stable after 10 adaptations (last panel in Figure \ref{fig:lda1}).  	
However, because the log normalizing constant enters the joint density in an additive way, the absolute scale of the approximation error does matter for the purpose of tempering.  Having an approximation error $\sim 10^3$ in the scale of $\log z$ is tiny compared to the range of the whole curve, and inevitable if we use any parametric regularization to fit the curve. But this  $10^3$ error is comparable with the scale of log likelihood. That is, if we start with the exactly known log normalizing constant $\log z(a) = \mathcal{O}(10^4)$, we will obtain the exact uniform margin of $a$ in the posterior. But if at one single temperature point $a_1$ we add an $1\%$ noise $ \log \hat z(a_0) += 0.01\log z(a_0) $, we will create an $\exp({100}) = 10^{43}$ bump in the marginal density $p(a)$: essentially a point mass at $a_0$.

This pitfall does not mean our method is particularly flawed. In fact, discrete tempering methods are even worse, as (a) in the best case they estimate a coarse discrete ladder, while here every point matters, and  (b) they  work in the scale of normalizing constant  $z$ directly, and it is hopeless to estimate a quantity in the scale of $\exp(-80000)$ with absolute accuracy. 

As far as we know, there has not been any attempt to run discrete tempering on LDA models. In the usual Metropolis-within-Gibbs scheme, a Gibbs swap requires draws from the conditional distribution given temperature, a step taking several hours in this large model, and discrete tempering often requires several thousand such steps!  In contrast, our method is feasible to run on this dataset because it only requires one joint HMC sample per adaptation. 

What is the general lesson we can learn from this counterexample? First, estimating the log normalizing constant is not equivalent to successful tempering, where we care more about the absolute approximation error in the second case. 

Second, tempering imposes dimensional limitations. The log likelihood scales linearly with the data input, and the magnitude depends on the how good the model fits the data. 
A generic prior-posterior geometric path will essentially fail when we add more and more data.

Third, in this example LDA is not even designed for predictions and its negative log likelihood (i.e., pointwise training error) is large even for one single input point. In general, a weak prediction model will amplify the log likelihood explosion in the prior-posterior path.  

One remedy here is to start with a better constructed base measurement, such that the log normalizing constant will be smaller.  In this example,  the discrepancy between the prior and posterior is too large, as $\mathrm{KL}$(prior, posterior) is  of the order $\exp(70000)$, and even path sampling fails to fill the gap.

For this particular model, if the goal is the log normalizing constant (or, equivalently, the log marginal likelihood), we believe the path sampling estimate  is still useful, and arguably more reliable than importance sampling and bridge sampling for reasons we have explained in our paper. But we will not use path sampling and simulated tempering for this LDA model when the goal is posterior sampling, for which we instead recommend the multi-chain-stacking  \citep{yao2020stacking}.

\end{document}